\newcolumntype{C}[1]{>{\centering\let\newline\\\arraybackslash\hspace{0pt}}m{#1}}
\newcolumntype{Y}{>{\centering\arraybackslash}X}
\definecolor{Gray}{gray}{0.9}
\newcolumntype{g}{>{\columncolor{Gray}}l}
\renewcommand{\emph}{\textit}
\newcommand{\uli}{\overset{uli}{=}}
\newtheorem{proposition}{Proposition}
\newtheorem{result}{Result}
\newcommand{\aliceobs}[1]{A_{#1}}
\newcommand{\aliceproj}[2]{\Lambda^{#1}_{#2}}
\newcommand{\aliceseq}[2]{{\mathsf A}^{#1}_{#2}}
\newcommand{\bobobs}[1]{B_{#1}}
\newcommand{\bobproj}[2]{\Pi^{#1}_{#2}}
\newcommand{\bobseq}[2]{{\mathsf B}^{#1}_{#2}}
\newcommand{\eveproj}[1]{E_{#1}}
\newcommand{\alicekraus}[2]{K^{#1}_{#2}}
\newcommand{\bobkraus}[2]{K^{#1}_{#2}}
\begin{document}
    
\title{Secure and robust randomness with sequential quantum measurements}

% Authors
\author{Matteo Padovan}
\affiliation{Dipartimento di Ingegneria dell'Informazione, Universit\`a degli Studi di Padova, via Gradenigo 6B, IT-35131 Padova, Italy}
\affiliation{Centro di Ateneo di Studi e Attivit\`a Spaziali ``Giuseppe Colombo", Universit\`a di Padova, via Venezia 15, IT-35131 Padova, Italy}

\author{Giulio Foletto}
\thanks{Affiliated to KTH Royal Institute of Technology, Stockholm SE-106 91, Sweden at the time of revision and publication.}
\affiliation{Dipartimento di Ingegneria dell'Informazione, Universit\`a degli Studi di Padova, via Gradenigo 6B, IT-35131 Padova, Italy}

\author{Lorenzo Coccia}
\affiliation{Dipartimento di Ingegneria dell'Informazione, Universit\`a degli Studi di Padova, via Gradenigo 6B, IT-35131 Padova, Italy}

\author{Marco Avesani}
\affiliation{Dipartimento di Ingegneria dell'Informazione, Universit\`a degli Studi di Padova, via Gradenigo 6B, IT-35131 Padova, Italy}

\author{Paolo Villoresi}
\affiliation{Dipartimento di Ingegneria dell'Informazione, Universit\`a degli Studi di Padova, via Gradenigo 6B, IT-35131 Padova, Italy}
\affiliation{Padua Quantum Technologies Research Center, Universit\`a degli Studi di Padova, via Gradenigo 6B, IT-35131 Padova, Italy}

\author{Giuseppe Vallone}
\email{vallone@dei.unipd.it}
\affiliation{Dipartimento di Ingegneria dell'Informazione, Universit\`a degli Studi di Padova, via Gradenigo 6B, IT-35131 Padova, Italy}
\affiliation{Padua Quantum Technologies Research Center, Universit\`a degli Studi di Padova, via Gradenigo 6B, IT-35131 Padova, Italy}
\affiliation{Dipartimento di Fisica e Astronomia, Universit\`a degli Studi di Padova, via Marzolo 8, IT-35131 Padova, Italy}

\begin{abstract}
    Quantum correlations between measurements of separated observers are crucial for applications like randomness generation and key distribution. 
    Although device-independent security can be certified with minimal assumptions, current protocols have limited performances. 
    Here, we exploit sequential measurements, defined with a precise temporal order, to enhance performances by reusing quantum states. 
    We provide a geometric perspective and a general mathematical framework, analytically proving a Tsirelson-like boundary for sequential quantum correlations, which represents a trade-off in nonlocality shared by sequential users. 
    This boundary is advantageous for secure quantum randomness generation, certifying maximum bits per state with one remote and two sequential parties, even if one sequential user shares no nonlocality. 
    Our simple qubit protocol reaches this boundary, and numerical analysis shows improved robustness under realistic noise. 
    A photonic implementation confirms feasibility and robustness. 
    This study advances understanding of sequential quantum correlations and offers insights for efficient device-independent protocols.
\end{abstract}

\maketitle

\section{Introduction}
The effectiveness of information security protocols, whether quantum or classical, relies on specific assumptions.
Classical protocols typically make considerations about the computational capabilities of adversaries. 
On the contrary, the security of quantum protocols is based solely on the validity of quantum theory. 
However, to leverage this validity in practice, certain assumptions about the implementation are necessary, making the protocol device-dependent.
Efforts towards minimizing assumptions for enhanced security evaluation give rise to the device-independent approach in quantum information. 
A protocol is deemed device-independent when its security remains guaranteed without assumptions about the internal workings of the devices used in its implementation. 
In these schemes, a physical system prepared in an entangled state is shared and measured by different users, who choose their measurements randomly.
Entanglement is necessary to produce correlations that are not reproducible by any local hidden variable theory. 
These correlations are referred to as nonlocal.
The outcomes serve the dual purpose of manifesting nonlocality and providing a useful classical resource, such as a key or random bit.
In principle, the security of this resource is guaranteed by nonlocality even if the devices implementing the protocols are entirely untrusted or controlled by adversaries.

A major drawback of these schemes is the low rate of resource extraction. 
This is mainly due to the challenges of creating and preserving entanglement, which is degraded by the coupling of the system with the environment.
Instead of relying on faster entanglement generation, which may be feasible in the future, we study how to optimize the extraction of useful resources from each single entangle system.
A way of doing so proposed in the scientific literature uses weak measurements to realize sequential protocols, in which each quantum system is measured more times \cite{Silva2015,Mal2016,Schiavon2016,Curchod2017a,Hu2018,Tavakoli2018a,Brown2020,Foletto2020,Foletto2021}.
Often, they are direct extensions of schemes that use projective measurements, adding further intermediate measurements, and improving the performance in terms of resources extracted from the same quantum system.
With the strategy proposed in \cite{Curchod2017a} it is even possible, in principle, to produce an unlimited amount of device-independent randomness for each generated bipartite entangled state.
However, the robustness to noise of this protocol is limited and therefore requires great accuracy of realization \cite{Foletto2021}.

At the same time, the appeal of sequential protocols lies in the correlations they can create, for instance for the possibility of sharing nonlocality among multiple users \cite{Cheng2021, Cheng2022, Steffinlongo2022}. 
An effective approach for characterizing quantum correlations is through a geometrical perspective.
However, although the geometry of quantum correlations has been the subject of several studies \cite{Brunner2014,Christensen2015,Goh2018}, its extension to the sequential setting is little known.
Most previous analyses focus only on the correlations between each sequential user and the remote one, finding a monogamy trade-off: stronger correlations for one user imply weaker ones for the others \cite{Silva2015,Curchod2017a,Foletto2020, Brown2020}.
A detailed investigation of the trade-off, its geometry, and its implications could help formulate better quantum protocols that could overcome this compromise \cite{Gallego2014,Bowles2020}.
Moreover, the literature lacks a general mathematical framework that is useful for characterizing sequential quantum correlations.

In this paper, we characterize sequential quantum correlations with a geometric approach.
First, we provide a general mathematical framework useful for describing any sequential quantum scenario.
Then, we extend the common two-user, two-measurement, two-outcome scenario with a sequential user on one side and study the geometry of the obtainable correlations, identifying a Tsirelson-like quantum boundary that also serves as monogamy trade-off. 
This trade-off provides further insights about the sharing of nonlocality between sequential users.

Furthermore, we show that the correlations on the boundary can be used to certify the maximum amount of local randomness obtainable for our scenario, that is, two bits.
This is possible regardless of how nonlocality, quantified as violation of a given Bell inequality, is divided between the sequential pairs, meaning that the trade-off for nonlocality is not a trade-off for randomness.
Contrary to intuition, the maximal number of bits is attained even if the correlations generated by one of the pairs are entirely local.
This is in contrast to previous results in which randomness was generated from nonlocal pairwise correlations \cite{Curchod2017a}, and offers a perspective for future works.
We also propose an explicit protocol that can generate boundary correlations using states and measurements similar to those that maximally violate the CHSH inequality.
Compared to the protocol of Ref.\ \cite{Curchod2017a}, which can also achieve two bits of randomness with two dichotomic measurements, ours is simpler as it requires fewer different settings.
Furthermore, we show numerically through semidefinite programming techniques that it is more robust to noise, because it is insensitive to the nonlocality trade-off.
Our protocol is also simpler than the two proposals of \cite{Acin2016}, which can also certify two bits of randomness, since it requires fewer measurements, allowing for easier experimental implementation.

Finally, to demonstrate the feasibility and noise resilience of our protocol, we performed a proof-of-concept experimental test based on polarization-entangled photon pairs that generate the correlations required by the protocol.
From these correlations, we could certify 39\% more random bits than those obtainable with standard non-sequential CHSH protocols in the same noise conditions.

The paper is structured as follows.
In Sec.\ \ref{sec:seq_scenario} we introduce a convenient formalism to describe the sequential quantum scenario. 
In Sect. \ref{sec:bounds} we show some inequalities in the values of the correlations.
In Sec.\ \ref{sec:seqCHSHprotocol} we propose a protocol to saturate these inequalities: This will lead us to identify part of the boundary of the sequential quantum correlation.
In Sec.\ \ref{sec:randomness}, we show how the saturation of the inequalities can be used to certify randomness, supporting the discussion with numerical simulations for some non-ideal cases.
Finally, in Sec.\ \ref{sec:experiment} we describe a proof-of-concept sequential quantum experiment based on quantum optics.

\section{Methods}
\subsection{The sequential scenario}

We work in the sequential scenario defined in Ref. \cite{Gallego2014}, and specifically in a scenario that includes three users: Alice, Bob$_1$, Bob$_2$.
A schematic is depicted in Fig.\ \ref{fig:boxes}.
A common source prepares an unknown physical system that is shared and then measured by the untrusted devices operated by the three users.
Each user randomly chooses a measurement identified by a binary input $x, y_1, y_2 \in \{0,1\}$ and obtains as a result a binary output $a, b_1, b_2 \in \{\pm 1\}$.
We assume all inputs to be independent of one another, and forbid any communication during data collection between Alice and the Bobs, but we allow unidirectional communication from Bob$_1$ to Bob$_2$ between the production of their respective outputs: This characterizes the sequential correlation scenario that we formally define below.

The main goal of this work is to study the properties of the correlations between inputs and outputs $p(a, \mathbf{b} | x, \mathbf{y}) = p(a,b_1,b_2|x,y_1,y_2)$ that can be generated in this scenario and to see how they can be used to produce device-independent random numbers from the Bobs outputs.
We assume that after sufficiently many independent and identically distributed runs, the correlations are known perfectly, neglecting the effects of finite statistics.
Moreover, we do not have requirements on the probabilities of the inputs, as long as they allow for the entire reconstruction of the correlations $p(a, \mathbf{b} | x, \mathbf{y})$.

The absence of communication means that Alice's marginal probabilities are independent of Bobs' inputs and vice versa.
Formally, the correlations must satisfy the no-signaling conditions \cite{Brunner2014}:

\begin{equation}
    \begin{split}
        \sum_a p(a, \mathbf{b} | x, \mathbf{y}) &= \sum_a p(a, \mathbf{b} | x', \mathbf{y}) \quad \forall \mathbf{b}, x, x', \mathbf{y} \\
        \sum_{\mathbf{b}} p(a, \mathbf{b} | x, \mathbf{y}) &= \sum_{\mathbf{b}} p(a, \mathbf{b} | x, \mathbf{y'}) \quad \forall a, x, \mathbf{y}, \mathbf{y'}
    \end{split}
\end{equation}
Furthermore, sequentiality implies that Bob$_2$'s input cannot influence Bob$_1$ \cite{Gallego2014}:

\begin{multline}
\label{eq:gallego_seq_cons}
    \sum_{b_2} p(a, b_1, b_2| x, y_1, y_2) = \sum_{b_2} p(a, b_1, b_2| x, y_1, y'_2) \\ \forall a, b_1, x, y_1, y_2, y'_2
\end{multline}

As is common in the context of device-independent protocols, we focus on the set of sequential quantum correlations $Q_{SEQ}$, i.e.\  those sequential correlations that can be written using Born rule as

\begin{multline}
    \label{eq:prob_quant_seq}
    p(a,\mathbf{b} | x, \mathbf{y}) =   \\
    \sum_{\mu, \mu_1,\mu_2} \hspace{-1.5ex} \mathrm{Tr} \Bigl[(\alicekraus{x}{a,\mu}\otimes \bobkraus{y_2}{b_2,\mu_2}\bobkraus{y_1}{b_1,\mu_1})\rho 
    {(\alicekraus{x}{a, \mu} \otimes {\bobkraus{y_2}{b_2,\mu_2}}\bobkraus{y_1}{b_1,\mu_1})}^\dag \Bigr]
\end{multline}
where we have imposed the standard tensor product form to separate Alice and the Bobs \cite{Navascues2012}, and the measurements are described in terms of Kraus operators such that $\sum_{a, \mu} {\alicekraus{x}{a, \mu}}^\dagger \alicekraus{x}{a, \mu} = \sum_{b_1, \mu_1} {\bobkraus{y_1}{b_1,\mu_1}}^\dagger \bobkraus{y_1}{b_1,\mu_1} = \sum_{b_2,\mu_2} {\bobkraus{y_2}{b_2,\mu_2}}^\dagger \bobkraus{y_2}{b_2,\mu_2} = \openone$ for any input \cite{Bowles2020}.
We note that the sequentiality is reflected in the order of the Kraus operators.

Expression \eqref{eq:prob_quant_seq} considers that a real implementation of a protocol can be generally described in terms of mixed states and non-projective measurements.
However, the shared state can be assumed to be pure because even if the actual is not, it is always possible to consider its purification in a larger Hilbert space without changing the correlations.
Similarly, through the Stinespring \cite{Stinespring1955} dilation, the right-hand side of \eqref{eq:prob_quant_seq} can be rewritten in terms of orthogonal projective measurements satisfying additional constraints in order to guarantee \eqref{eq:gallego_seq_cons}
(see Supplementary Methods A and \cite{Neumark1943, Bowles2020} for details).

The description in terms of pure states and projective measurements is more convenient for studying the geometry of sequential quantum correlations and we will adopt it in the following.
Moreover, as shown in Supplementary Methods A, it can also be rephrased in terms of 
unitary and hermitian operators (namely measurement operators with only $\pm1$ eigenvalues) $\bobobs{y_1}$ and $\bobobs{y_1,y_2}$  that satisfy the following constraints:
\begin{equation}
    \begin{split}
        &[\bobobs{y_1},\bobobs{y_1,y_2}] = 0 \\
        &\bobobs{y_1}^\dagger =  \bobobs{y_1} \ , \quad \bobobs{y_1,y_2}^\dagger =  \bobobs{y_1,y_2} \hspace{1cm} \forall y_1,y_2  \\
        &\bobobs{y_1}^\dagger \bobobs{y_1} = \bobobs{y_1,y_2}^\dagger \bobobs{y_1,y_2} = \openone  \,.
    \end{split}
    \label{eq:C_constraints}
\end{equation}
They can be understood as the observables measured by Bob$_1$ and Bob$_2$. Indeed, the operators $\bobobs{y_1}$ reproduce the statistics of Bob$_1$, while the operators $\bobobs{y_1,y_2}$ reproduce the statistics of Bob$_2$ given that Bob$_1$ has chosen the input $y_1$. 
See Fig.\ \ref{fig:boxes} for a schematic of the scenario.
Similar considerations can be applied to Alice's side to define two unitary and hermitian operators $\aliceobs{x}$.
Without the sequentiality requirement, Alice has no commutation relation analogue to that of \eqref{eq:C_constraints}.

With these definitions, the correlations in the sequential scenario can always be written as
\begin{equation}\label{eq:prob_proj}
    p(a,\mathbf{b} | x, \mathbf{y}) = \expval*{\aliceproj{x}{a}\otimes \bobproj{y_1}{b_1}\bobproj{y_1,y_2}{b_2}}{\psi}
\end{equation}
where $\aliceproj{x}{a}$, $\bobproj{y_1}{b_1}$ and $\bobproj{y_1,y_2}{b_2}$ are the projectors on the eigenspaces of hermitian operators $\aliceobs{x}$, $\bobobs{y_1}$ and $\bobobs{y_1,y_2}$ respectively (i.e., $\bobobs{y_1}=\bobproj{y_1}{+}-\bobproj{y_1}{-}$ and similarly for Alice and Bob$_2$).
The commutation relation in \eqref{eq:C_constraints} guarantees that the product of the Bobs' projectors can be used to compute a well-defined probability.

\label{sec:seq_scenario}
\begin{figure}
    \centering
    \includegraphics[width=\linewidth]{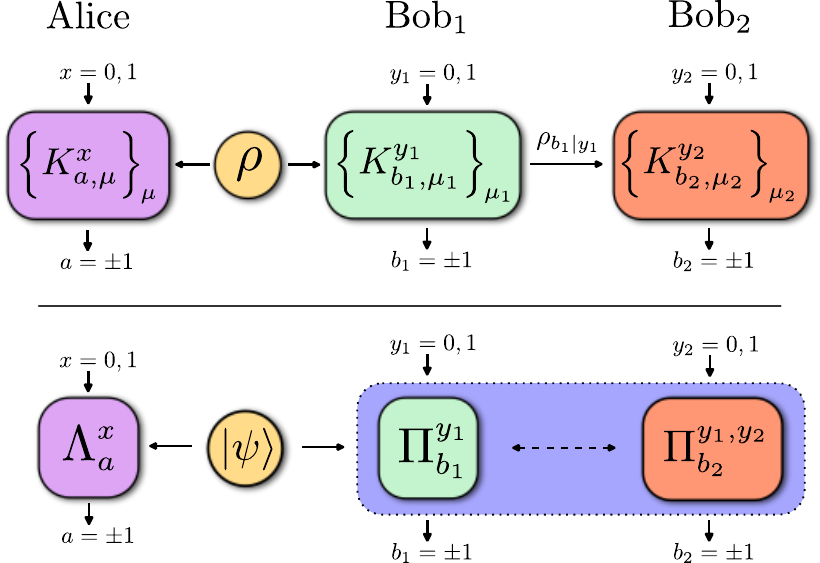}
    \caption{Schematic of the sequential scenario. Above, the framework with Kraus operators. Bottom, the projective framework with the operators introduced in Sec. \ref{sec:seq_scenario}.}
    \label{fig:boxes}
\end{figure}

\section{Results}

\subsection{Bounds on the sequential quantum correlations}
\label{sec:bounds}
Having introduced the notation, we now present a Tsirelson-like bound satisfied by the sequential quantum correlations.

In our specific case, we are not interested in the operations of Bob$_2$ after Bob$_1$ has chosen the input $y_1=1$.
This means that we will only consider the marginal probability distribution $p(a, b_1 |x, y_1=1) = \sum_{b_2} p(a,b_1,b_2|x,y_1=1,y_2)$.
We are allowed to do this because of sequentiality: Bob$_2$ cannot influence Bob$_1$ and hence $p(a, b_1 |x, y_1=1)$ is well defined and does not depend on $y_2$.
Our results are valid regardless of what Bob$_2$ does in this case, we can even think that he does not perform any measurement at all.
With this simplification, the association of inputs and measurements is as follows:
\begin{equation}
\begin{split}
    \text{\textit{Input sequence}} &\hspace{0.75cm} \text{\textit{Measurements}}\\
    y_1,y_2=0,0 &\hspace{0.75cm} \bobobs{0} \text{ and } \bobobs{0,0} \\ 
    y_1,y_2=0,1 &\hspace{0.75cm} \bobobs{0} \text{ and } \bobobs{0,1} \\
    y_1, y_2=1,0 \text{ or } 1,1 &\hspace{0.75cm} \bobobs{1}\,.
\end{split}
\end{equation}
Consequently, we consider the following operators:
\begin{equation}
    \begin{split}
        S_1&\equiv(\aliceobs{0}
    + \aliceobs{1})\bobobs{0} +(\aliceobs{0} - \aliceobs{1})\bobobs{1}\\
        S_2&\equiv(\aliceobs{0} + \aliceobs{1})\bobobs{0,0}+(\aliceobs{0} - \aliceobs{1})\bobobs{0,1}\,.
    \end{split}
\end{equation}
These are two CHSH-like operators relative to Alice-Bob$_1$ and Alice-Bob$_2$ respectively, and their mean values can be measured in our scenario from the correlations $p(a,\mathbf{b}|x, \mathbf{y})$ by selecting the values of the inputs that correspond to the relevant observables.
Hence, the usual results about CHSH operators also apply, so that in a quantum setting $\expval*{S_i}\leq 2\sqrt{2}$.
Moreover, from conceptually similar results in the literature \cite{Silva2015,Curchod2017a,Foletto2020}, one can expect a trade-off between $\expval*{S_1}$ and $\expval*{S_2}$, therefore it is meaningful to consider an expression that combines the two:
\begin{equation}
    \label{eq:Stheta}
        S_\theta\equiv \cos2\theta (S_1-\sqrt{2}\openone) + \sin 2\theta (S_2-\sqrt{2}\openone)\,.
\end{equation}
Furthermore, we introduce the operator
\begin{equation}
    S_c\equiv(\aliceobs{0} + \aliceobs{1})\bobobs{0,0} +(\aliceobs{0} - \aliceobs{1})\bobobs{1}
\end{equation}
whose expected value is a function of part of the statistic of Alice-Bob$_1$ and part of the statistic of Alice-Bob$_2$.
This is a well-defined CHSH-like operator, as the relevant observables on the Bobs' side are measured with different inputs: $y_1, y_2 = 1,0 \text{ or } 1,1$ for $\bobobs{1}$ and $y_1, y_2 = 0,0$ for $\bobobs{0,0}$.
Therefore, in a quantum experiment $\expval*{S_c} \leq 2\sqrt{2}$.

We can express now our main result (proven in Supplementary Methods B) on the geometry of the sequential correlations, which is a bound on $\expval*{S_1}$ and $\expval*{S_2}$ in the specific case in which $\expval*{S_c}$ takes its maximum value $2\sqrt{2}$.
\begin{result}
\label{res:geometry}
    For any sequential quantum correlation in our scenario, it holds that
    \begin{equation}
    \label{eq:bellineqseq}
        \expval*{S_c}=2\sqrt{2}\quad\Rightarrow\quad \expval*{S_\theta} \leq \sqrt{2}\,,\quad \forall \theta\,,
    \end{equation}
    and there exist correlations that saturate the inequality.
\end{result}
This upper bound on $\expval*{S_\theta}$ can be interpreted as a monogamy relation between the correlations of Alice-Bob$_1$ and Alice-Bob$_2$.
This is different from the trade-offs already present in the literature because $S_2$ considers Bob$_1$'s input, since $\bobobs{0,0}$ and $\bobobs{0,1}$ are measured only if $y_1 = 0$.
Instead, in Ref.\ \cite{Silva2015}, the quantity similar to $S_2$ is calculated ignoring the actions of Bob$_1$, while the protocols of Refs.\ \cite{Curchod2017a,Foletto2020} calculate separate CHSH quantities for each of Bob$_1$'s outputs, adapting Alice's measurements to obtain the highest values.

\begin{figure}
    \centering
    \includegraphics[width=0.9\linewidth]{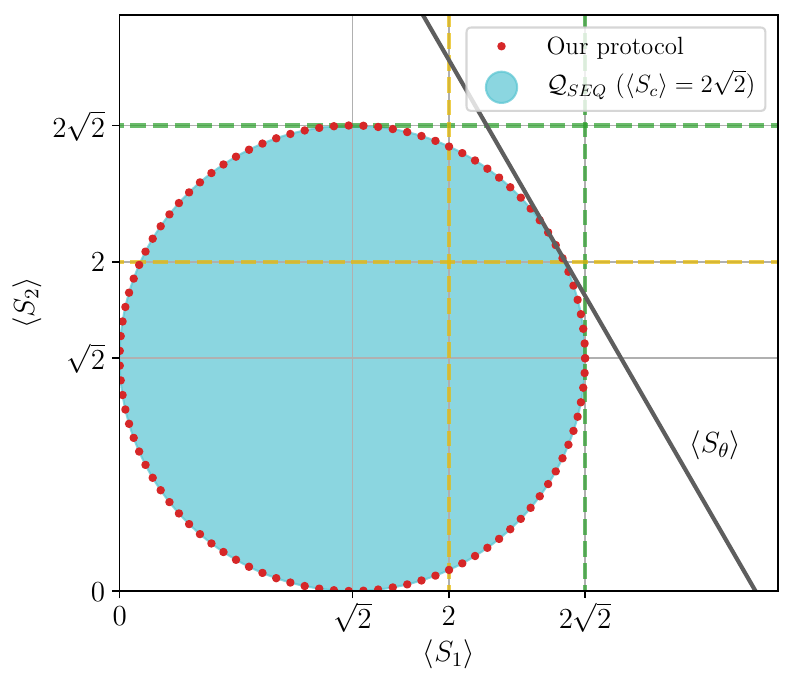}
    \caption{Cross-section of the set of sequential quantum correlations, $\mathcal{Q}_{SEQ}$, at $\expval{S_c}=2\sqrt{2}$.
    The dashed lines denote  the maximum values achievable by $\expval{S_1}$ and $\expval{S_2}$ in the local and non-sequential quantum scenarios, without restrictions on $\expval{S_c}$. The red dots mark the geometric location of the correlations achievable with the protocol explained in Sec \ref{sec:seqCHSHprotocol}.}
    \label{fig:boundary}
\end{figure}

\subsection{Sequential-CHSH protocol}
\label{sec:seqCHSHprotocol}
In the following we will provide, for any given value of $\theta$, state and operators that generate correlations for which $\expval*{S_c}=2\sqrt{2}$ and $S_\theta = \sqrt{2}$, proving that the inequality \eqref{eq:bellineqseq} is tight and identifies a boundary of $Q_{SEQ}$ in our scenario. 

In the scheme, Alice and $\text{Bob}_1$ share the maximally entangled Bell state $\ket*{\phi^+}_{AB} = \qty(\ket{00} + \ket{11})/\sqrt{2}$, where $\ket{0}$ and $\ket{1}$ are the eigenstates of the $\sigma_z$ Pauli matrix. 

Alice randomly chooses between two inputs $x \in \{0,1\}$, corresponding to the two observables
\begin{equation}
    \aliceobs{0} = \frac{\sigma_z + \sigma_x}{\sqrt{2}}\hspace{1cm}
    \aliceobs{1} = \frac{\sigma_z - \sigma_x}{\sqrt{2}} \,.
\end{equation}

$\text{Bob}_1$ randomly chooses between two inputs $y_1 \in \{0,1\}$, the latter corresponding to a projective measurement of $\sigma_x$ and the former to the non-projective measurement realized by the two Kraus operators depending on the parameter $\theta$:
\begin{equation}
\label{eq:bob1_kraus}
    \begin{aligned}
    K_{+}\qty(\theta) &= \cos\theta\dyad{0} + \sin\theta\dyad{1} \,, \\
    K_{-}\qty(\theta) &= \cos\theta\dyad{1} + \sin\theta\dyad{0} \,.
    \end{aligned} 
\end{equation}
In this expression, the parameter $\theta$ is taken to be the same as the one appearing in \eqref{eq:Stheta} in order to achieve correlations that saturate the inequality in \eqref{eq:bellineqseq}. Its value has a clear physical meaning: it controls the strength of the measurement, in the sense that $\theta=n\frac{\pi}{2}$ leads to a projective measurement of $\pm \sigma_z$, while for $\theta=\frac{\pi}{4}+n\pi$ correspond to a non-interactive measurement. At $\theta=\frac{\pi}{4}+n\frac{\pi}2$ the two Kraus operators are equal, up to a sign.

After these operations, if $y_1=1$, the protocol ends.
Otherwise, for $y_1=0$, $\text{Bob}_1$ sends the post-measurement state to $\text{Bob}_2$, who randomly chooses between the projective measurements of $\sigma_z$ or $\sigma_x$, each corresponding to one of the two inputs $y_2 \in \{0,1\}$.

As discussed in Supplementary Methods D, in terms of projective operators, this protocol can be formulated by leaving unchanged $\aliceobs{0}$ and $\aliceobs{1}$, while introducing the operators
\begin{equation}
    \begin{split}
        \bobobs{0} &= \sigma_z \otimes \sigma_z \\
        \bobobs{1} &= \sigma_x \otimes \openone_{B''} \\
        \bobobs{0,0} &= \sigma_z \otimes \openone_{B''} \\
        \bobobs{0,1} &= \sigma_x \otimes \sigma_x
    \end{split}
    \label{eq:our_O_operators}
\end{equation}
on the Bobs' side.
These act on an Hilbert space $\mathcal{H}_{B'}\otimes\mathcal{H}_{B''} = \mathbb{C}^2\otimes\mathbb{C}^2$.
The shared state is now
\begin{equation}
    \ket{\psi} = \ket*{\phi^+}_{AB'}\Bigl[\cos\theta \ket{0}_{B''} + \sin\theta \ket{1}_{B''}\Bigr]
\end{equation}
One can verify, using Eqs.\ \eqref{eq:prob_quant_seq} and \eqref{eq:prob_proj}, that the sequential and projective formulations give the same correlations, and that the operators $\bobobs{y_1}$ and $\bobobs{y_1,y_2}$ respect all the constraints in Eq.\ \eqref{eq:C_constraints}.
Moreover the relations $\expval*{S_c} = 2\sqrt{2}$ and $\expval*{S_\theta}=  \sqrt{2}$ hold
with the above defined state and operators, 
proving that the inequality on $S_\theta$ is tight and define a boundary, as claimed.

A geometrical depiction of this boundary is shown in  Fig.\ \ref{fig:boundary} and can be deduced by Eq.\ \eqref{eq:Stheta}: For each $\theta$, when $\expval{S_\theta}=\sqrt{2}$, this equation describe the tangent to a circumference in the $\expval{S_1} \expval{S_2}$ plane, centered at $(\sqrt{2},\sqrt{2})$ and of radius $\sqrt{2}$. 
The points on the circumference are spanned by the protocol just discussed, while the interior of the circle is filled with sequential quantum correlations satisfying $\expval{S_c}=2\sqrt{2}$ and $\expval{S_\theta}<\sqrt{2}$.

\subsection{Randomness from correlations}
\label{sec:randomness}
We can now move to our second main result, which is a statement on the randomness that can be obtained from correlations on the aforementioned boundary of $Q_{SEQ}$.
In this work, we consider only local randomness, originating solely from the side of the Bobs.
Given a sequential probability distribution that is observed experimentally $P_{exp}(a,\mathbf{b}|x,\mathbf{y})$, the quantity of device-independent random numbers that can be extracted from the outcomes corresponding to a specific input sequence $\mathbf{y_r}$ can be measured by the (quantum conditional) min-entropy $H_{\text{min}}=-\log_2 G$ \cite{Brown2020entropy}, where $G$ is the maximum guessing probability that an adversary Eve has on the Bobs' outcomes when the input sequence is $\mathbf{y_r}$: 

\begin{gather}
\label{eq:SDP_problem}
    G = \max_{p_{ABE}} \sum_{\mathbf{b}} p_{BE} \qty( \mathbf{b},\mathbf{b}|\mathbf{y_r} ) \\
\begin{aligned}
\label{eq:SDP_constraints}
    \text{s.t.} \hspace{0.5cm} \sum_{\mathbf{e}} p_{ABE}(a,\mathbf{b},\mathbf{e}|x,\mathbf{y}) &= P_{\rm exp}(a,\mathbf{b}|x,\mathbf{y})\, , \\
    p_{ABE}(a,\mathbf{b},\mathbf{e}|x,\mathbf{y}) &\in \mathcal{Q}_{SEQ}\,.
\end{aligned}
\end{gather}
The first condition of Eq.\ \eqref{eq:SDP_constraints} compels Eve to use a strategy $p_{ABE}$ that is compatible with the experimental correlations $P_{exp}(a,\mathbf{b}|x,\mathbf{y})$.
The second means that the strategy is also quantum in the sense explained in Sec.\ \ref{sec:seq_scenario} and the sequentiality requirement applies only to the Bobs.

With this definition, we can express the second main result of our work:
\begin{result}
\label{res:randomness}
    For any sequential quantum correlation in our scenario such that $\expval*{S_c} = 2\sqrt{2}$ and $\expval*{S_\theta} = \sqrt{2}$ for a given $\theta \neq n \frac{\pi}{4}$,
    the min-entropy is 
    \begin{equation}
        H_{\mathrm{min}}=2 \ \text{bits} 
    \end{equation}
    when evaluated with the input sequence $\mathbf{y_r}=(0,1)$. If $\expval*{S_\theta} = \sqrt{2}$ for some $\theta=n\frac{\pi}{4}$, it reduces to $H_{\mathrm{min}} = 1$ bit.
\end{result}
The proof, provided in Supplementary Methods C, 
is based on the self-testing properties of the CHSH inequality \cite{Supic2020}, which are valid because $\expval*{S_c} = 2\sqrt{2}$, and on the additional necessary conditions that the quantum state and measurements must satisfy in order to saturate also Eq.\ \eqref{eq:bellineqseq}.
We emphasize that the demonstration is conducted in a device-independent scenario, and it remains valid regardless of the dimension and specific details of the sequential quantum realization. Examples of states and operators capable of producing 2 bits of randomness are those described in Section \ref{sec:seqCHSHprotocol}.

Two dichotomic measurements can provide at most two random bits.
The fact that they achieve this bound, certifies the complete unpredictability of their outcomes.
This descends from the features of the entire correlation $P_{exp}(a,\mathbf{b}|x,\mathbf{y})$ and not just from the pairwise ones.
Indeed $\expval*{S_1}$ and $\expval*{S_2}$ cannot be maximized simultaneously, and the situations in which one is maximized are exactly those for which the randomness drops to one bit.
By compromising on their respective nonlocality, Bob$_1$ and Bob$_2$ achieve the best results in terms of randomness.
There are even regions on the boundary in which either the correlations between Alice and Bob$_1$ or those between Alice and Bob$_2$ are entirely local, as can be checked by verifying that all CHSH inequalities involving their paired results are respected.
Yet, thanks to the three-party correlations, the min-entropy is still maximal at two bits.

However, due to unavoidable experimental imperfections, a real implementation cannot generate ideal correlations that sit exactly at the boundary, therefore it is important to study the amount of device-independent randomness in the interior of $Q_{SEQ}$.
We address this problem numerically using the Navascu\'{e}s-Pironio-Ac\'{i}n (NPA) hierarchy \cite{Navascues2007,Navascues2008}, and its sequential generalization \cite{Bowles2020}.
This tool replaces the usually difficult-to-verify second condition in \eqref{eq:SDP_constraints} with an ordered series of increasingly stringent necessary conditions on linear combinations of the probabilities $p_{ABE}(a,\mathbf{b},\mathbf{e}|x,\mathbf{y})$.
The constraint $p_{ABE}(a,\mathbf{b},\mathbf{e}|x,\mathbf{y}) \in \mathcal{Q}_{SEQ}$ is retrieved when all conditions are satisfied, but stopping to a finite order $k$ of the series allows casting the problem to a practical semi-definite program (SDP) \cite{Boyd2004} and restricts $p_{ABE}$ to belong to a set $\mathcal{Q}_{SEQ}^k \supseteq \mathcal{Q}_{SEQ}$ \cite{Bowles2020}.
This means that the optimization is performed over a larger set of correlations than what is allowed by quantum mechanics and gives Eve more power than she actually has.
The solution of the program is then an upper bound of the actual guessing probability: Finding a value $G$ through the SDP certifies in a device-independent way that the min-entropy of the two outcomes is at least $-\log_2G$ bits.

Numerical issues could in principle overestimate the min-entropy, but this can be prevented by giving tolerances to the constraints of Eq.\ \eqref{eq:SDP_constraints}.
These tolerances always benefit Eve and, if chosen much larger than the machine precision, overwhelm its the potentially dangerous effect \cite{Winick2018}.

Rather than computing the min-entropy for all possible values of $\expval*{S_c}$ and $\expval*{S_\theta}$, we do it in the context of the protocol explained in Sec.\ \ref{sec:seqCHSHprotocol}, so as to study also its noise robustness.
We numerically generate the experimental correlations using the maximally entangled state $\ket*{\phi^+}$ mixed with random noise, namely $\rho_{AB}=(1-p)\dyad{\phi^+}+p\openone/4$, and the measurements required by the protocol.
We then set these correlations as constraint in the optimization problem \eqref{eq:SDP_problem}. 
We perform such computation for different values of the strength parameter $\theta$, since, for noisy states, different values of $\theta$ could influence the performance of the protocol by imposing different limitations on Eve's strategies.
Because of the symmetry of the protocol, it is sufficient to restrict the analysis to $\theta \in [0,\frac{\pi}{4}]$.
For such numerical computations we adopt Ncpol2sdpa \cite{Wittek2015} and the solver SDPA-DD \cite{Nakata2010}, setting a minimal solver precision of $10^{-12}$ for all the theoretical simulations.
The NPA order is $1+\text{AB}$ \cite{Bowles2020}: this order is enough for retrieving the analytical result in the ideal case scenario.

\begin{figure}
\centering
  \includegraphics[width=\linewidth]{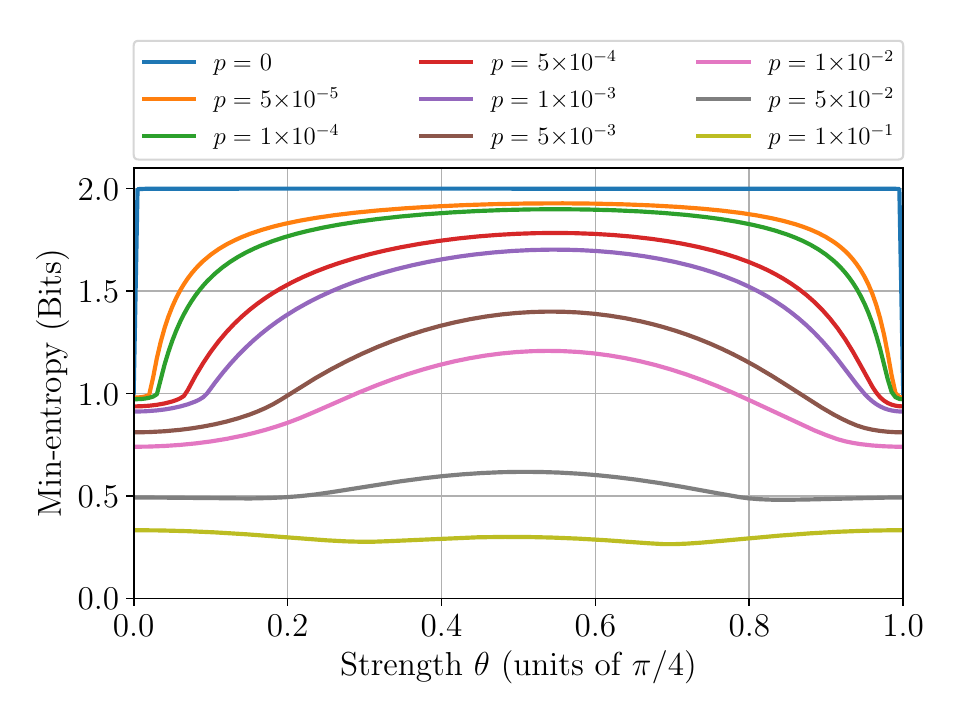}
  \caption{Min-entropy from the sequential protocol proposed in Sec. \ref{sec:seqCHSHprotocol} as a function of the strength $\theta$ for several values of the noise $p$. Simulations achieved with the NPA order $1+\text{AB}$.
  For $p=0$ we retrieve the results obtained
  analytically.}
  \label{fig:scanStrength}
\end{figure}
\begin{figure}
\centering
  \scalebox{0.53}{\input{figs/optimal_strength.pgf}}
  \caption{Best min-entropy achievable with three different protocols as a function of the depolarization parameter. Our proposal explained in Sec.\ \ref{sec:seqCHSHprotocol}, the standard CHSH protocol based on numerical optimization (NPA at order $4$) \cite{NietoSilleras2014}, and the sequential protocol proposed by \cite{Curchod2017a}. The experimental data are subjected to additional type of noise not considered by the curves simulations, such as the $c$ parameter discussed in Sec.\ \ref{sec:experiment}.}
  \label{fig:optimalStrength}
\end{figure}

In Fig.\ \ref{fig:scanStrength} we plot the simulation result, which confirms
that, in the ideal case ($p=0$), the min-entropy of the measurements of the protocol is two bits for each value of $\theta\in (0,\frac{\pi}{4})$. When the strength parameter $\theta$ is at one of the two extremes, the min-entropy drops to one bit, in agreement with our theoretical result.
With the help of the sequential protocol, it is straightforward to understand the drop by observing the state after the measurement of $\text{Bob}_1$.
For $\theta = 0$, $\text{Bob}_1$ measures projectively, hence the state sent to $\text{Bob}_2$ is separable and Eve can easily guess the second bit.
For $\theta = \frac{\pi}{4}$, the measurements of $\text{Bob}_1$ produce no useful correlations and their outcomes are also easily predictable by Eve.
Yet, because the measurement is non-interactive, $\text{Bob}_2$ still receives a portion of a maximally entangled pair and generates with Alice the perfect correlations that allow him to certify that his outcomes are unpredictable.
In both cases, one outcome (and hence one bit) is securely random, and the other is known to Eve.

Figure \ref{fig:scanStrength} also shows the impact that the noise quantified by $p$ has on the performance.
Intermediate values of $\theta$ are optimal, as they are farthest from the extremal points that reduce the randomness even in the ideal case.
The approximate flatness of the curve also means that inaccuracies in the setting of $\theta$ reduce performance only slightly, simplifying the requirements for the experimental implementation.
This descends from the fact that the performance of the noiseless protocol is independent of $\theta$ (except for the extremal points).
This is in contrast with all other protocols present in the literature, whose optimal performance is obtained for specific values of $\theta$ which are close to pathological points \cite{Curchod2017a, Foletto2021, Bowles2020}.

In Fig.\ \ref{fig:optimalStrength} we show the best min-entropy achievable with the sequential protocol as a function of the parameter $p$.
It indicates that it is possible to generate more than one random bit per state even if $p\approx 1.8 \cdot 10^{-2}$.
This value is fairly typical for sources of polarization-entangled photon pairs based on spontaneous parametric down-conversion, and can be reduced with state-of-the-art equipment \cite{Poh2015,Liu2018,Liu2021,Li2021,Liu2022}.
For comparison, we plot also the min-entropy achievable with a non-sequential protocol that works in the CHSH scenario and uses the NPA hierarchy \cite{NietoSilleras2014}.
We find that the threshold value of $p$ at which the two curves begin to split is approximately $8.5\cdot 10^{-2}$, meaning that for any smaller value the sequential protocol performs better than its non-sequential counterpart.
The equivalent threshold for the protocol of Ref.\ \cite{Curchod2017a} is a much smaller $3.7\cdot 10^{-3}$ \cite{Foletto2021}.

We point out that this value is in general affected by the finite orders of the NPA hierarchy set in the maximization \eqref{eq:SDP_problem} of the two protocols, which are $1+\text{AB}$ and $4$ respectively.

In a realistic implementation of the sequential scheme (which still neglects finite-size effects), Alice and the Bobs would generate random inputs to select the measurements to be performed on each state.
Their choices should be unbalanced, favoring $\mathbf{y}=\mathbf{y_r}=(0,1)$ for the Bobs, and arbitrarily one of the two observables for Alice.
This is to reduce the randomness cost to select the inputs, which, in the asymptotic limit can be made arbitrarily close to zero bits per state.
Alice and the Bobs' devices should receive the inputs and produce the outputs while outside of one another's light cones, to avoid the locality loophole.
From the complete list of inputs and outputs gathered in a time interval, Alice and the Bobs should calculate the experimental correlations $P_{\rm exp}(a,\mathbf{b}|x,\mathbf{y})$ to use in the SDP \eqref{eq:SDP_problem} with the help of the NPA hierarchy.
The string of outputs of the Bobs corresponding to $\mathbf{y}=\mathbf{y_r}$ should be considered as consisting of pairs of bits (one from Bob$_1$ and one from Bob$_2$).
The average min-entropy corresponding to each pair would be calculated from the guessing probability $G$ resulting from the problem.
Finally, the Bobs should reduce the string using a randomness extractor and the knowledge of the min-entropy, producing a shorter but uniform and secure sequence of random bits \cite{Trevisan2000}.
The post-processing, consisting of the SDP and the extraction, could be executed during the acquisition of further outputs for a subsequent experimental run, thus reducing its impact on performance.
However, the SDP for this protocol can typically be solved in seconds on average personal computers if tackled at level $1+\text{AB}$ of the NPA hierarchy.
This holds independently of the number of samples as only probabilities are used.
The extraction scales at worst quadratically with the length of the raw key, but can be efficiently parallelized \cite{Tang2019}.

\begin{table*}
    \caption{Experimental results of the sequential CHSH experiment.
    Level 1+AB of the NPA hierarchy is used.
    Data retrieved with an exposure time of 100 s ($\sim 3\cdot 10^5$ coincidences).}
    \begin{tabular*}{\linewidth}{l @{\extracolsep{\fill}}ccccc}
        \toprule \toprule
         ID & $p$ & $c$ & $\theta$ & $H_{\mathrm{min}}$ \scriptsize (Model) & $H_{\mathrm{min}}$ \scriptsize (Experiment) \\
        & & & \scriptsize (rad) & \scriptsize (bits) & \scriptsize (bits)\\\midrule
        $1$ & $0.019$ & $0.017$ & $0.412$ & $0.82$ & $0.85 \pm 0.02$  \\
        $2$ & $0.016$ & $0.012$ & $0.436$ & $0.89$ & $0.86 \pm 0.01$ \\
        $3$ & $0.015$ & $0.012$ & $0.357$ & $0.90$ & $0.90 \pm 0.01$ \\
        \bottomrule\bottomrule
        \label{tab:seqchsh_tests_results}
    \end{tabular*}\\
        \begin{tabular*}{\linewidth}{l @{\extracolsep{\fill}}ccccccc}
        \toprule \toprule
         ID & $\expval*{S_1}$ \scriptsize (Model) & $\expval*{S_1}$ \scriptsize (Experiment) & $\expval*{S_2}$ \scriptsize (Model) & $\expval*{S_2}$ \scriptsize (Experiment) & $\expval*{S_c}$ \scriptsize (Model) & $\expval*{S_c}$ \scriptsize (Experiment)\\
        & & & & & &\\\midrule
        $1$ & $2.305$ & $2.292 \pm 0.002$ & $2.388$ & $2.421 \pm 0.003$ & $2.751$  & $2.738 \pm 0.003$\\
        $2$ & $2.270$ & $ 2.268\pm 0.002$ & $2.444$ & $ 2.433 \pm 0.003$ & $2.766$  & $ 2.760\pm 0.002$\\
        $3$ & $2.272$ & $2.432 \pm 0.002$ & $2.446$ & $2.250 \pm 0.003$ & $2.770$ & $2.778 \pm 0.002$\\
        \bottomrule\bottomrule
    \end{tabular*}
    \caption{Experimental results of the CHSH experiment.
    $\expval*{S}$ is the CHSH value and for the min-entropy the analytical bound is used \cite{Pironio2010}.
    Data retrieved with an exposure time of 100 s ($\sim 3\cdot 10^5$ coincidences).}
    \begin{tabular*}{\linewidth}{@{\extracolsep{\fill}}lcccc}
        \toprule \toprule
         ID &  $\expval*{S}$ \scriptsize (Experiment) & $H_{\mathrm{min}}$ \scriptsize (Model) & $H_{\mathrm{min}}$ \scriptsize (Experiment) \\
        & & \scriptsize (bits) & \scriptsize (bits)\\\midrule
        $1$ & $2.761\pm 0.003$ & $0.60$ & $0.61 \pm 0.01$ \\
        $2$ & $2.772 \pm 0.003$ & $0.63$ & $0.64 \pm 0.01 $ \\
        $3$& $2.797\pm 0.002$ & $0.64$ & $0.73 \pm 0.01$ \\
        \bottomrule\bottomrule
        \label{tab:chsh_tests_results}
    \end{tabular*}
\end{table*}

\subsection{Experiment}
\label{sec:experiment}

We evaluated the protocol presented above with a proof-of-concept experiment, with the goal of verifying the feasibility of meeting the required quality for the entangled state and measurements.
For this purpose, we did not create an actual random number generator, but only a setup that reproduces all the quantum operations needed by the protocol, to observe the correlations.
Furthermore, we did not include the random inputs but only scanned all the measurement settings one by one.
Hence, our setup did not require any randomness source, which would be needed by a true generator.
As mentioned before, we can only infer probabilities from our experiment by assuming that the results for each quantum state are independent and identically distributed and neglecting the effects of a finite dataset.
We did not close neither the detection nor the locality loophole, relying instead on fair sampling and on the assumption that Alice and the Bobs do not communicate while producing outcomes (although Bob$_1$ is allowed to send information to Bob$_2$).
All of this can only be valid at the proof-of-concept level of our experiment and should be improved for a true implementation of the scheme. 
Yet, our observations are critical to show the feasibility and experimental robustness of the proposed protocol.

The experimental setup is the same of our previous works and uses polarization-entangled photon pairs and Mach-Zehnder interferometers to implement the Kraus operators \eqref{eq:bob1_kraus} \cite{Foletto2021,Foletto2020} (see also Supplementary Methods E for a detailed description).
Most of the imperfections in this setup can be modeled by a bipartite state of the form 
\begin{equation}
\label{eq:model_state}
    \rho_{AB}=(1-p-c)\dyad*{\phi^+}+p\frac{\openone}{4} + c\frac{\dyad{00}+\dyad{11}}{2} \,,
\end{equation}
where $p\in[0,1]$, as above, accounts for the depolarization caused by mixing with random noise, whereas $c\in[0,1]$ induces decoherence by reducing the extreme antidiagonal terms of the density matrix with respect to the diagonal ones.
In optical experiments, this is caused by alignment inaccuracies that increase the distinguishability between the two photons in each pair.
The two parameters $p$ and $c$ can be easily estimated experimentally by measuring the visibilities in the $\mathcal{Z}$ and $\mathcal{X}$ bases, indeed $p=1-V_{\mathcal{Z}}$ and $c=V_{\mathcal{Z}}-V_{\mathcal{X}}$ \cite{Foletto2021}.

We performed three experiments, labeled by an $\text{ID}\in\{1,2,3\}$.
Each of them attempts to reproduce the correlations required by the sequential-CHSH protocol described in Sec.\ \ref{sec:seqCHSHprotocol} and by the standard CHSH protocol.
For each experiment, we measured the correlations between Alice and the Bobs and we used them as constraints in an NPA hierarchy but instead of setting the whole statistic $P_{\rm exp}(a,\mathbf{b}|x,\mathbf{y})$, we constrained only the single-observable mean values $\expval*{A_x}$, $\expval*{B_{y_1}}$ and $\expval*{B_{y_1,y_2}}$, and the two-observable mean values $\expval*{A_xB_{y_1}}$ and $\expval*{A_xB_{y_1,y_2}}$, which are all obtainable from the experiment.
Doing so allowed us to get around the fact that our simplified experiment can produce results that do not strictly meet the requirements of the protocol.
Indeed, during the experiment the state produced by the source changes slightly.
This is mainly due to temperature variations that lead to the movement of the optical components.
This affects the interferometers and fiber couplings and eventually the experimental probability distribution.
Since we are scanning the measurements one by one, we are effectively using different states for each measurement, in contrast with Eq.\ \eqref{eq:prob_quant_seq}.
Constraining all correlations would have prevented the SDP from finding a proper solution, whereas our relaxed constraints allowed us to find one with a small solver tolerance of $10^{-12}$ \cite{Nakata2010}.
In general, this approach does not introduce security issues, since having a smaller number of constraints only gives more power to Eve and finds a min-entropy that is lower than what could be achieved by considering all the correlations.
The execution of the SDP was carried out on a personal computer and took less than \SI{10}{\second}.

We also compared the results with those predicted by our model using the same constraints, with the values of $p$, $c$, and $\theta$ that best fit the experimental data.
We calculated the statistical errors on the experimental results as standard deviations of a sample of $300$ simulated experiments.
In each of these, the photon counts descend from a Poisson distribution whose mean value is the experimental datum.

Tables \ref{tab:seqchsh_tests_results} and \ref{tab:chsh_tests_results} summarize the results of all three experiments, reporting the min-entropies and the mean values of the CHSH quantities $\expval{S_1}$, $\expval{S_2}$, $\expval{S_c}$, and $\expval{S}$ (which is measured in the non-sequential scenario).
They show that our protocol not only is feasible but can overcome the rate of the standard CHSH scheme in real world implementations.
Indeed, we found min-entropies between $0.82$ and $0.90$ bits, or between $23\%$ and $39\%$ higher than those obtained in the non-sequential scenario with the same states, even with visibilities $V_{\mathcal{Z}} \approx 98\%$ and $V_{\mathcal{X}}\approx 97\%$, which are readily accessible to entangled-photon sources built with commercial components.

In addition, the comparison between our results and the predictions of the model show that the latter can be used to evaluate the performance of this type of schemes.
The discrepancies can be attributed to other static imperfections in the setup which are not considered by the model and to the aforementioned changes of the state from one measurement to the next.

\section{Discussion}
In this work, we studied the set of sequential quantum correlations through a geometrical perspective.
Initially, we presented a general mathematical framework applicable to describing any sequential quantum scenario.
Using this framework in the context of one party on one side and two sequential parties on the other, we identified a  Tsirelson-like quantum boundary. 
This boundary can be interpreted as a  monogamy trade-off between the amounts of nonlocality of the sequential users shared with the remote one.
Despite this trade-off, we proved analytically that the correlations on the boundary certify the maximal amount of randomness in our device-independent scenario, specifically, two bits (excluding exceptional cases).
This result introduces a fundamental perspective: a trade-off for nonlocality does not necessarily translate into one for randomness.
In simpler terms, even if the correlations of one sequential user with the spatially separated one are explicable through local hidden variable theories, they can contribute to the generation of secure randomness when considered \textit{jointly} with the correlations of the other users.

We also proposed an explicit simple qubit-based protocol to generate the correlations on the boundary in the ideal case, and we numerically studied its noise robustness, finding that it can beat the non-sequential CHSH protocol for depolarization $p\lesssim 8.5\cdot 10^{-2}$ and produce more than one random bit for $p\lesssim 1.8\cdot10^{-2}$, values that are feasible to achieve with current technologies.

Finally, we implemented a proof-of-concept experiment, demonstrating not only the feasibility of our protocol, but also that it can perform better than the non-sequential CHSH-based scheme with real-world systems.
Indeed, we overcame the min-entropy of the latter by $23\%$ to $39\%$, and produced $0.90\pm 0.01$ bits in our best run. 
To the best of our knowledge, this is the first experimental observation of the advantage of a sequential protocol with respect to its one-step counterpart in terms of randomness generation.

On the basis of this work, we envisage further steps as follows.
When correlations lie on a quantum boundary, it may happen that they identify, or self-test, a unique (up to local isometries) quantum representation that realizes them \cite{Supic2020,Franz2011}.
It would be interesting to understand if this can happen also in the sequential case and whether the correlations of our protocol can self-test the state and measurements that produce them.

In addition, other portions of the boundary in this scenario might prove useful.
A possible avenue is to relax the condition $\expval*{S_c} = 2\sqrt{2}$ and study the bounds for $\expval*{S_\theta}$.
Our formalization of quantum sequential correlations in terms of commuting projective measurements might be helpful, but if boundary features cannot be analytically probed, the sequential extension of the NPA hierarchy can be used \cite{Bowles2020}.
It could also be meaningful to consider other parameterizations of the boundary.
For example, the upper bound of Eq.\ \eqref{eq:bellineqseq} can equivalently be written in terms of
\begin{equation}
    S'_\alpha \equiv \cos \alpha \ S_{+}+\sin \alpha \ S_{-}
\end{equation}
as
\begin{equation}
\label{eq:bellineqseq_alternative}
    \expval*{S'_\alpha} \leq 2\,,
\end{equation}
with 
$S_{\pm}=(\aliceobs{0} + \aliceobs{1})\bobobs{0} \pm (\aliceobs{0} - \aliceobs{1})\bobobs{0,1}$. \
\begin{figure}
    \centering
    \includegraphics[width=0.9\linewidth]{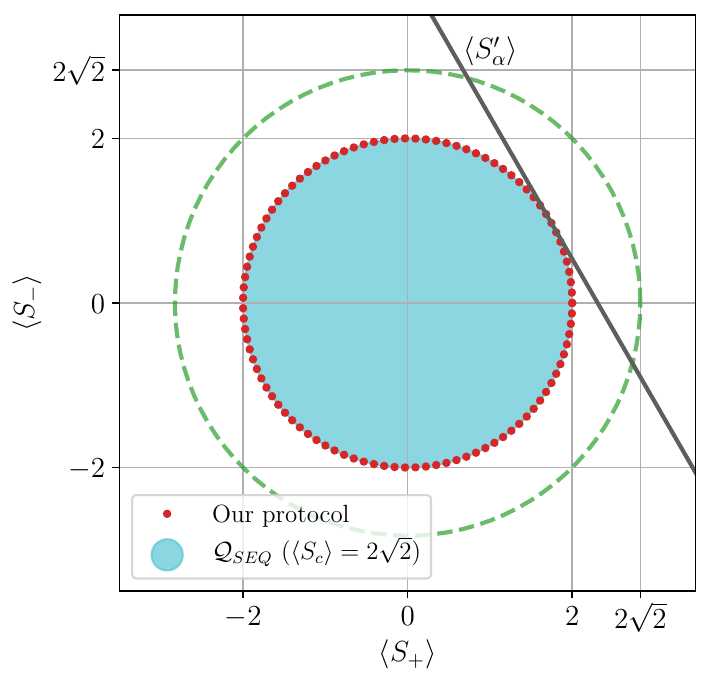}
    \caption{Cross-section of the sequential quantum set at $\expval{S_c}=2\sqrt{2}$ in the parametrization $\expval*{S_{\pm}}$.
    The dashed circumference denotes the maximum values achievable in non-sequential quantum scenarios, without restrictions on $\expval{S_c}$. The red dots mark the geometric location of the correlations achievable with the protocol explained in Sec \ref{sec:seqCHSHprotocol}.}
    \label{fig:boundary_S3}
\end{figure}
This expression, detailed in Supplementary Methods B, gives the boundary represented in Fig.\ \ref{fig:boundary_S3}.
Without constraining $\expval*{S_c} = 2\sqrt{2}$ and without the commutation relations of Eq.\ \eqref{eq:C_constraints} (derived from sequentiality), the Tsirelson-like bound of $\expval*{S_\alpha'}$ is relaxed to $2\sqrt{2}$, leading to a relation similar to the one in \cite{Christensen2015}.
Due to this greater similarity with the existing literature, $S'_\alpha$ might be easier to investigate than $S_\theta$.

Our protocol could also be more thoroughly investigated in its robustness to losses.
The standard way to treat losses in device-independent schemes is to assign the no-output events to one of the legitimate outputs.
In our case, this would cause the correlations to fall from the boundary and into the interior of $Q_{SEQ}$.
Could this be partially compensated for with a different set of states and measurements?

It would also be interesting to study whether the protocol can be extended to more Bobs.
This stems from the intuition that the independence of the min-entropy from the strength parameter is due to the sequence of two mutually-unbiased measurements, $\sigma_z$ and $\sigma_x$.
This opens up the possibility of adding a third sequential party measuring $\sigma_y$: In this case, the bits would be extracted from a sequence of three mutually unbiased observables.
Is it then possible to achieve three bits regardless of the strength parameters under ideal conditions?
Could the noise robustness of such a protocol be enough for real-world implementations?
A limitation might be the complexity of the SDP, which grows considerably with the number of Bobs.

In conclusion, this work offers tools and results that can improve our understanding of sequential quantum correlations and the performance of randomness generation protocols.
The formulation in terms of products of commuting measurements might provide a more intuitive description and suggest interesting points of view from which to analyze a given scenario.
For example, it can be used for the investigation of the sharing of nonlocality \cite{Cheng2021, Cheng2022, Steffinlongo2022}.
The boundary correlations we studied highlight that the greatest quantum advantage is reached using the entire set of experimental probabilities, and not just the pairwise ones.
This paves the way for further studies on the complex relationship between nonlocality and randomness and can improve the performance of device-independent random number generators with present-day technologies.

\section*{Data availability}
Data is available from the corresponding author upon reasonable request.

\section*{Code availability}
The codes used for the simulations for this paper are available from the corresponding author upon reasonable request.

\section*{Acknowledgements}
    The authors would like to thank Dr.\ Flavio Baccari (Max Planck Institute of Quantum Optics), Prof.\ Stefano Pironio (Université Libre de Bruxelles), and Dr.\ Peter Brown (Télécom Paris) for the useful discussions and clarifications.
    The computational resources offered by the CAPRI initiative (University of Padova Strategic Research Infrastructure Grant 2017: “CAPRI: Calcolo ad Alte Prestazioni per la Ricerca e l’Innovazione”) and BLADE cluster are acknowledged.
    Part of this work was supported by Ministero dell'Istruzione, dell'Università e della Ricerca (MIUR) (Italian Ministry of Education, University and Research) under the initiative ``Departments of Excellence'' (Law No. 232/2016), by Fondazione Cassa di Risparmio di Padova e Rovigo within the call ``Ricerca Scientifica di Eccellenza 2018'',  project QUASAR, by the European Union’s Horizon 2020 research and innovation programme, project QUANGO (grant agreement No. 101004341), and by  the project “Quantum Security Networks Partnership” (QSNP, grant agreement No. 101114043).

\section*{Author contributions}
M.P., G.F., L.C., and G.V. provided analytical proofs.
M.P. conducted numerical simulations.
M.P. carried out the experiment and analyzed the results, with assistance from G.F. and L.C. during setup preparation.
M.A., P.V., and G.V. supervised the work.
All authors participated in result discussions and contributed to the final manuscript.

\section*{Competing interests}
The authors declare no competing interests.

\bibliographystyle{naturemag}
\bibliography{bibliography}

\onecolumngrid
\appendix
\renewcommand{\thesubsection}{\Alph{subsection}}

\section*{Supplementary Methods}
\subsection{Alternative formulations of the sequential scenario}
\label{sec:app_seq_to_bipap}
Here we give an alternative characterization for sequential quantum correlations, which will serve to prove the validity of the contruction based on unitary and dichotomic observables introduced in the main text.

In the following we will use symbol $\mathbf{y}$ for a sequence of inputs and $\mathbf{y}_k$ for its truncation at the $k$-th element.
We will say that $\mathbf{y}_l \succeq \mathbf{y}_k$ if $l \geq k$ and the first $k$ elements in $\mathbf{y}_l$ are the same of $\mathbf{y}_k$ (i.e., $\mathbf{y}_k$ is a truncation of $\mathbf{y}_l$).

\begin{proposition}
    A given correlation $p(\mathbf{a}, \mathbf{b}| \mathbf{x},\mathbf{y})$ is sequential and quantum if and only if it can be written as $p(\mathbf{a}, \mathbf{b}| \mathbf{x},\mathbf{y}) = \expval{\prod_k \aliceproj{\mathbf{x}_k}{a_k} \otimes \prod_k \bobproj{\mathbf{y}_k}{b_k}}{\psi}$ with $\mathbf{x}\succeq \mathbf{x}_k, \mathbf{y}\succeq \mathbf{y}_k$, and the operators satisfying:
    \begin{equation}
        \begin{split}
            \sum_{b_k}\bobproj{\mathbf{y}_k}{b_k} &= \openone \quad \forall k, \mathbf{y}_k \quad \text{(normalization)} \\
            \bobproj{\mathbf{y}_k^\dagger}{b_k} &= \bobproj{\mathbf{y}_k}{b_k} \quad \forall k,  \mathbf{y}_k, b_k\quad \text{(hermiticity)} \\
            \bobproj{\mathbf{y}_k}{b_k}\bobproj{\mathbf{y}_k}{b'_k} &= \delta_{b_kb'_k} \bobproj{\mathbf{y}_k}{b_k} \quad \forall k, \mathbf{y}_k, b_k, b'_k \quad \text{(proj.\ and ortho.)} \\
            [\bobproj{\mathbf{y}_k}{b_k},\bobproj{\mathbf{y}_l}{b_l}] &= 0 \quad \forall k, l, b_k, b_l, \mathbf{y}_l\succeq \mathbf{y}_k \quad \text{(comm.)} \,.
        \end{split}
        \label{eq:proj_appendix}
    \end{equation}
    and similarly for $\aliceproj{\mathbf{x}_k}{a_k}$.
\end{proposition}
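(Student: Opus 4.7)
The plan is to establish both directions of the equivalence via the Stinespring (Naimark) dilation, which translates between Kraus-operator and projective descriptions on suitably enlarged Hilbert spaces. I will illustrate with the two-measurement case (the generalization to arbitrarily many sequential measurements proceeds by induction).

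For the forward direction, assume $p$ is sequential quantum in the sense of \eqref{eq:prob_quant_seq}. To each measurement setting I adjoin a fresh ancilla with fiducial state $\ket{0}$; the corresponding Kraus operators can then be realized by a Stinespring unitary $V_x$ (for Alice) or $U_{\mathbf{y}_k}$ (for Bob's $k$-th measurement) acting on the system tensored with its own ancilla, followed by a rank-one projective measurement on that ancilla. Pulling these projective measurements back through the unitaries gives hermitian projectors $\aliceproj{x}{a}$ and $\bobproj{\mathbf{y}_k}{b_k}$, the latter obtained by conjugating the ancilla projector by $U_{\mathbf{y}_k} \cdots U_{\mathbf{y}_1}$, which is what introduces the history dependence. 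Normalization, hermiticity, projectivity and orthogonality follow by construction. The commutation $[\bobproj{\mathbf{y}_k}{b_k}, \bobproj{\mathbf{y}_l}{b_l}] = 0$ for $\mathbf{y}_l \succeq \mathbf{y}_k$ follows because $U_{\mathbf{y}_l}$ acts trivially on every ancilla except the $l$-th: the factor of $\bobproj{\mathbf{y}_k}{b_k}$ sitting on the $k$-th ancilla commutes with the factor of $\bobproj{\mathbf{y}_l}{b_l}$ on the $l$-th one.

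For the reverse direction, the subtlety is that in \eqref{eq:prob_quant_seq} each Kraus operator $\bobkraus{y_k}{b_k,\mu_k}$ depends only on $y_k$, whereas $\bobproj{\mathbf{y}_k}{b_k}$ may depend on the entire history. To resolve this I enlarge Bob's Hilbert space with a classical register $\mathcal{H}_R$ initialized to $\ket{0}_R$ and set
\begin{equation*}
    \bobkraus{y_1}{b_1,\mu_1} = \bobproj{y_1}{b_1} \otimes \ket{y_1}_R\bra{\mu_1}_R, \qquad \bobkraus{y_2}{b_2} = \sum_{y_1'} \bobproj{y_1',y_2}{b_2} \otimes \ket{y_1'}_R\bra{y_1'}_R.
\end{equation*}
These Kraus operators (i) depend only on their own input, (ii) satisfy $\sum_{b_k,\mu_k} K^\dagger K = \openone$ by normalization of the projectors and of the register basis, and (iii) when composed on the extended state $\rho \otimes \ket{0}_R\bra{0}_R$ retain contributions only from $\mu_1 = 0$; the resulting expression reduces to $\mathrm{Tr}[\rho\, \bobproj{y_1}{b_1}\bobproj{y_1,y_2}{b_2}]$ upon using $[\bobproj{y_1}{b_1}, \bobproj{y_1,y_2}{b_2}] = 0$ and idempotence $\bobproj{y_1}{b_1}^2 = \bobproj{y_1}{b_1}$. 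Alice's side is dilated analogously and more simply.

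The hardest part will be the reverse direction. A priori one might doubt that the history dependence of $\bobproj{\mathbf{y}_k}{b_k}$ can be captured by strictly input-local Kraus operators, but the register trick shifts that dependence entirely into the state, at the price only of enlarging Bob's Hilbert space by a finite classical system. Verifying simultaneously that trace preservation, input-locality, and correctness of the correlations all survive this construction is the main algebraic content of the proof.
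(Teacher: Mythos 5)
Your proof is correct in outline, but it takes a genuinely different route from the paper's. The paper does not dilate the Kraus description itself: it imports from Ref.~\cite{Bowles2020} the fact that any sequential quantum correlation admits a description in terms of ``sequence projectors'' $\bobseq{\mathbf{y}}{\mathbf{b}}$ (orthogonal projectors labelled by the \emph{whole} input/output sequence, subject to a marginal-independence condition), and then proves a purely algebraic dictionary between those and your single-step projectors: in one direction $\bobproj{\mathbf{y}_k}{b_k}\equiv\sum_{\mathbf{b}'}\delta_{b_kb'_k}\bobseq{\mathbf{y}}{\mathbf{b}'}$ (coarse-graining, well defined thanks to the sequentiality condition), and in the other $\bobseq{\mathbf{y}}{\mathbf{b}}\equiv\prod_l\bobproj{\mathbf{y}_l}{b_l}$. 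You instead work directly from the Kraus form of Eq.~\eqref{eq:prob_quant_seq}: your forward direction is a Stinespring/Naimark dilation with one fresh ancilla per step (the commutation argument via disjoint ancillas and the deferred-measurement identity are both sound), and your reverse direction re-derives, via the classical-register trick, essentially the content of the Bowles~et~al.\ result that the paper takes as given --- namely that history-dependent instruments can be simulated by input-local Kraus operators on an enlarged local space. Your approach buys self-containedness at the cost of more explicit bookkeeping; the paper's buys brevity by outsourcing the hard step. Two small things you should tighten if you write this up: (i) for sequences longer than two, the register must record and update the entire input history $(y_1,\dots,y_{k-1})$, so the ``induction'' is not entirely mechanical and deserves a line; and (ii) in the forward direction you should state explicitly that the dilated projectors reproduce the correlation on the state $\ket{\psi}\otimes\ket{0}^{\otimes n}$ (it follows from deferred measurement, but it is part of the claim, not just the operator identities).
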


\begin{proof}
    Reference \cite{Bowles2020} already proves that any sequential quantum correlation can be written as $p(\mathbf{a}, \mathbf{b}| \mathbf{x},\mathbf{y}) = \expval{ \aliceseq{\mathbf{x}}{\mathbf{a}} \otimes \bobseq{\mathbf{y}}{\mathbf{b}}}{\psi}$ with the operators satisfying:
    \begin{equation}
        \begin{split}
            \sum_{\mathbf{b}}\bobseq{\mathbf{y}}{\mathbf{b}} &= \openone \quad \forall \mathbf{y}\quad \text{(normalization)} \\
            \bobseq{\mathbf{y}^\dagger}{\mathbf{b}} &= \bobseq{\mathbf{y}}{\mathbf{b}} \quad \forall \mathbf{y}, \mathbf{b}\quad \text{(hermiticity)} \\
            \bobseq{\mathbf{y}}{\mathbf{b}}\bobseq{\mathbf{y}}{\mathbf{b'}} &= \delta_{\mathbf{bb'}} \bobseq{\mathbf{y}}{\mathbf{b}} \quad \forall \mathbf{y}, \mathbf{b}, \mathbf{b'} \quad \text{(proj.\ and ortho.)} \\
            \sum_{b_{k+1}\ldots b_n}\bobseq{y_1\ldots y_{k+1}\ldots y_n}{\mathbf{b}} &= \sum_{b_{k+1}\ldots b_n}\bobseq{y_1\ldots y'_{k+1}\ldots y'_n}{\mathbf{b}} \\
            &\forall k, \mathbf{y}, \mathbf{y'}, b_1\ldots b_k \quad \text{(seq.)} \,.
        \end{split}
        \label{eq:seq_bowles_appendix}
    \end{equation}

    We can prove that any $\bobseq{\mathbf{y}}{\mathbf{b}}$ satisfying \eqref{eq:seq_bowles_appendix} can be written as the product $\prod_k \bobproj{\mathbf{y}_k}{b_k}$, with operators $\bobproj{\mathbf{y}_k}{b_k}$ satisfying \eqref{eq:proj_appendix} and viceversa to arrive at the conclusion.

    Let us start from $\bobseq{\mathbf{y}}{\mathbf{b}}$ satisfying \eqref{eq:seq_bowles_appendix}.
    Using all the truncations $\mathbf{y}_k$ of $\mathbf{y}$ (i.e., $\mathbf{y}\succeq \mathbf{y}_k$), we can define
    \begin{equation}
        \bobproj{\mathbf{y}_k}{b_k}\equiv \sum_{\mathbf{b'}}\delta_{b_k b'_k}\bobseq{\mathbf{y}}{\mathbf{b'}}\,.
        \label{eq:def_proj_from_seq}
    \end{equation}
    The product of these operators is $\prod_k \bobproj{\mathbf{y}_k}{b_k} = \sum_{\mathbf{b'}} \delta_{\mathbf{b}\mathbf{b'}} \prod_k \bobseq{\mathbf{y}}{\mathbf{b'}} = \bobseq{\mathbf{y}}{\mathbf{b}}$ as needed.
    
    Importantly, the right-hand side of \eqref{eq:def_proj_from_seq} is independent of $y_{k+1}\ldots y_n$.
    Indeed:
    \begin{equation}
        \sum_{\mathbf{b'}}\delta_{b_k b'_k}\bobseq{\mathbf{y}}{\mathbf{b'}} = \sum_{b'_1\ldots b'_{k-1}} \left( \sum_{b'_{k+1}\ldots b'_n} \bobseq{\mathbf{y}}{b'_1\ldots b'_{k-1}b_kb'_{k+1}\ldots b'_n }\right)
    \end{equation}
    and the term in parenthesis is exactly the one that appears in the sequentiality condition of \eqref{eq:seq_bowles_appendix} and is guaranteed to be independent of $y_{k+1}\ldots y_n$.
    Then, it is straightforward to verify the normalization, hermiticity, projectivity and orthogonality conditions in \eqref{eq:proj_appendix} starting from their counterparts in \eqref{eq:seq_bowles_appendix}.
    The commutation relation of \eqref{eq:proj_appendix} is found by noticing that 
    \begin{equation}
        \bobproj{\mathbf{y}_k}{b_k}\bobproj{\mathbf{y}_l}{b_l} = \bobproj{\mathbf{y}_l}{b_l}\bobproj{\mathbf{y}_k}{b_k} = \sum_{\mathbf{b'}} \delta_{b_kb'_k}\delta_{b_lb'_l}\bobseq{\mathbf{y}}{\mathbf{b'}}\,,
    \end{equation}
    which uses the fact that $\bobseq{\mathbf{y}}{\mathbf{b}}$ are orthogonal projectors and $\mathbf{y} \succeq \mathbf{y}_l\succeq \mathbf{y_k}$.

    Viceversa, let us start from projectors $\bobproj{\mathbf{y}_k}{b_k}$ satisfying \eqref{eq:proj_appendix}.
    For any given pair of input and output sequences $\mathbf{y}, \mathbf{b}$, we can directly define:
    \begin{equation}
        \bobseq{\mathbf{y}}{\mathbf{b}} \equiv \prod_l \bobproj{\mathbf{y}_l}{b_l}\,.
    \end{equation}
    Normalization, projectivity and orthogonality are straightforward.
    Hermiticity descends from the fact that the product involves only commuting hermitian operators.
    The sequentiality condition is verified considering that
    \begin{equation}
        \sum_{b_{k+1}\ldots b_n} \bobseq{\mathbf{y}}{\mathbf{b}} = \prod_{l=1}^k \bobproj{\mathbf{y}_l}{b_l} \prod_{l=k+1}^n \left(\sum_{b_l} \bobproj{\mathbf{y}_l}{b_l} \right) = \prod_{l=1}^k \bobproj{\mathbf{y_l}}{b_l}
    \end{equation}
    which does not depend on $y_{k+1}\ldots y_n$.
\end{proof}

All of this is valid regardless of the number of possible values for the inputs or outputs.
In the special case in which all measurements are dichotomic and return $b_k \in \{\pm 1\}$, we can build observables as
\begin{equation}
    \bobobs{\mathbf{y}_k}\equiv \sum_{b_k} b_k \bobproj{\mathbf{y}_k}{b_k} = \sum_{\mathbf{b}}b_k\bobseq{\mathbf{y}}{\mathbf{b}}
\end{equation}
which are hermitian and unitary.
This means that the construction used in the main text is a valid way to characterize sequential quantum correlations (in the scenario of interest for this work).
For completeness, we report that, in this dichotomic case, the operators $\bobseq{\mathbf{y}}{\mathbf{b}}$ can conversely be built from the observables with:

\begin{equation}
    \bobseq{\mathbf{y}}{\mathbf{b}} = \frac1{2^n}\Bigl(\openone+\sum_{k_1}b_{k_1}\bobobs{\mathbf{y}_{k_1}}+\sum_{k_1<\cdots <k_n}b_{k_1}\cdots b_{k_n}    \prod_{k = k_1}^{k_n}\bobobs{\mathbf{y}_{k}}\Bigr)\,.
\end{equation}

To summarize, the relevant operators in our scenario satisfy:
\begin{equation}
    \begin{gathered}
        \aliceobs{x}^\dagger \aliceobs{x} = \openone \quad \forall x \quad \text{(unit.)}\\
        \aliceobs{x}  = \aliceobs{x}^\dagger \quad \forall x \quad \text{(herm.)} \\
        \bobobs{y_1}^\dagger \bobobs{y_1} = \bobobs{y_1,y_2}^\dagger \bobobs{y_1,y_2} = \openone  \quad \forall y_1,y_2 \quad\text{(unit.)} \\
        \bobobs{y_1} = \bobobs{y_1}^\dagger,\, \bobobs{y_1,y_2} = \bobobs{y_1,y_2}^\dagger \forall y_1,y_2 \quad \text{(herm.)}\\
        [\bobobs{y_1},\bobobs{y_1,y_2}] = 0 \quad \forall y_1, y_2 \quad \text{(commutation)}\,,        
    \end{gathered}
    \label{eq:our_cons_appendix}
\end{equation}
where it is left implicit that $\aliceobs{x}$ and $\bobobs{y_1},\bobobs{y_1,y_2}$ act on separate Hilbert spaces.

\subsection{Characterization of the sequential quantum boundary}
\label{sec:app_charac_boundary}
Here we prove Result 1, i.e. that the relations $\expval*{S_c} = 2\sqrt{2}$ and $\expval*{S_\theta} = \sqrt{2}$ are a boundary for $Q_{SEQ}$. We also give an useful characterization for the states that allow to generate correlations on this boundary.
We rephrase the result in a more self-contained way as:
\begin{proposition}
    Let $x, y_1, y_2 \in \{0,1\}$ and $\aliceobs{x}$, $\bobobs{y_1}$, $\bobobs{y_1,y_2}$ be operators that satisfy \eqref{eq:our_cons_appendix}.
    Define the operators (as in the main text)
    \begin{equation}
        \begin{split}
            S_1&\equiv(\aliceobs{0} + \aliceobs{1})\bobobs{0} +(\aliceobs{0} - \aliceobs{1})\bobobs{1}\\
            S_2&\equiv(\aliceobs{0} + \aliceobs{1})\bobobs{0,0} +(\aliceobs{0} - \aliceobs{1})\bobobs{0,1} \\
            S_c&\equiv(\aliceobs{0} + \aliceobs{1})\bobobs{0,0} +(\aliceobs{0} - \aliceobs{1})\bobobs{1} \\
            S_\theta &\equiv \cos2\theta (S_1-\sqrt{2}\openone) + \sin 2\theta (S_2-\sqrt{2}\openone)\, ,
        \end{split}
    \end{equation}
    Then the following two relations hold
    \begin{equation}
        \begin{gathered}
            \expval*{S_c} \leq 2\sqrt{2} \\
            \expval*{S_c} = 2\sqrt{2} \implies \expval*{S_\theta} \leq \sqrt{2}
        \end{gathered}
        \label{eq:boundary_appendix}
    \end{equation}
    and the inequalities are tight for any $\theta$.    
\end{proposition}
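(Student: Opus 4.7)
My plan is to split the claim into the two assertions and to use different techniques for each. The first, $\langle S_c\rangle\le 2\sqrt{2}$, is just Tsirelson's bound: $S_c$ is a standard CHSH operator in the $\pm1$ Hermitian unitaries $A_0,A_1$ (on Alice's factor) and $B_{0,0},B_1$ (on Bob's factor), and the usual proof applies verbatim. The second, more substantial implication I would obtain by combining the CHSH sum-of-squares (SOS) identity for $S_c$ with the sequential commutation $[B_0,B_{0,1}]=0$ enforced by Eq.\ \eqref{eq:C_constraints}. This interplay is what makes the resulting bound strictly tighter than what one would obtain by bounding $S_1$ and $S_2$ independently.

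I would begin from the SOS decomposition
\begin{equation*}
2\sqrt{2}\,\openone - S_c = \tfrac{1}{\sqrt{2}}\Bigl[\bigl(A_0 - \tfrac{B_{0,0}+B_1}{\sqrt{2}}\bigr)^2 + \bigl(A_1 - \tfrac{B_{0,0}-B_1}{\sqrt{2}}\bigr)^2\Bigr],
\end{equation*}
which is readily verified using $A_x^2=B_y^2=\openone$ together with $[A_x,B_y]=0$. Saturating $\langle S_c\rangle = 2\sqrt{2}$ forces each positive summand to annihilate $|\psi\rangle$; adding and subtracting the two resulting relations yields the self-testing identities $X|\psi\rangle = B_{0,0}|\psi\rangle$ and $Y|\psi\rangle = B_1|\psi\rangle$ with $X\equiv(A_0+A_1)/\sqrt{2}$ and $Y\equiv(A_0-A_1)/\sqrt{2}$. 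Writing $S_1=\sqrt{2}(XB_0+YB_1)$ and $S_2=\sqrt{2}(XB_{0,0}+YB_{0,1})$, substituting these relations (and using $B_y^2=\openone$) collapses the terms involving $YB_1$ and $XB_{0,0}$ to $\sqrt{2}$ each, leaving $\langle S_1\rangle-\sqrt{2}=\sqrt{2}\langle P\rangle$ and $\langle S_2\rangle-\sqrt{2}=\sqrt{2}\langle Q\rangle$ with $P\equiv XB_0$ and $Q\equiv YB_{0,1}$ Hermitian. Therefore $\langle S_\theta\rangle = \sqrt{2}\langle R_\theta\rangle$ where $R_\theta\equiv\cos 2\theta\,P+\sin 2\theta\,Q$.

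The crux, and the step I expect to be the main obstacle, is showing that $P$ and $Q$ satisfy a Pauli-like algebra on $|\psi\rangle$: then $\langle R_\theta^2\rangle = 1$ and Cauchy--Schwarz closes the proof. Two ingredients must be combined. First, the operator identity $\{X,Y\} = A_0^2 - A_1^2 = 0$ combines with the \emph{sequential} commutation $[B_0,B_{0,1}]=0$ to give $\{P,Q\} = \{X,Y\}\,B_0B_{0,1} = 0$ as an operator identity. Second, iterating $X|\psi\rangle=B_{0,0}|\psi\rangle$ with $[X,B_{0,0}]=0$ (from the tensor structure) produces $X^2|\psi\rangle=|\psi\rangle$, hence $P^2|\psi\rangle=|\psi\rangle$, and analogously $Q^2|\psi\rangle=|\psi\rangle$. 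Putting these together, $R_\theta^2|\psi\rangle = (\cos^2 2\theta\,P^2 + \sin^2 2\theta\,Q^2)|\psi\rangle = |\psi\rangle$, so $|\langle R_\theta\rangle|^2\le \langle R_\theta^2\rangle = 1$, i.e.\ $\langle S_\theta\rangle\le\sqrt{2}$.

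Tightness for every $\theta$ is inherited from the qubit construction of Sec.\ \ref{sec:seqCHSHprotocol}: by direct computation that protocol realises $\langle S_c\rangle = 2\sqrt{2}$ together with $\langle S_\theta\rangle = \sqrt{2}$, so neither inequality can be improved. The genuine difficulty lies exactly in the interplay described above: without the CHSH saturation on Alice's side one loses $X^2|\psi\rangle=|\psi\rangle$, and without the sequential commutation $[B_0,B_{0,1}]=0$ on Bob's side the anticommutator $\{P,Q\}$ does not reduce to $\{X,Y\}B_0B_{0,1}$, so $P$ and $Q$ need not anticommute. Either missing ingredient would degrade the $\sqrt{2}$ bound on $\langle S_\theta\rangle$.
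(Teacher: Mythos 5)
Your proof is correct and follows essentially the same route as the paper's: both arguments extract $X\ket{\psi}=B_{0,0}\ket{\psi}$ and $Y\ket{\psi}=B_{1}\ket{\psi}$ (hence $X^2\ket{\psi}=Y^2\ket{\psi}=\ket{\psi}$) from the saturation of $S_c$, combine the sequential commutation $[B_0,B_{0,1}]=0$ with the operator identity $\{X,Y\}=0$ to kill the cross term, and conclude from $\langle R_\theta^2\rangle=1$. The only differences are cosmetic: you obtain the self-testing relations from an explicit sum-of-squares decomposition of $2\sqrt{2}\openone-S_c$ rather than citing the self-testing literature, and you close with Cauchy--Schwarz where the paper expands $\langle(\openone-R_\theta)^2\rangle\geq 0$ via its auxiliary operator $P$.
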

This means that $\expval*{S_c} = 2\sqrt{2}$ and $\expval*{S_\theta} = \sqrt{2}$ define a boundary for the set of sequential quantum correlations $Q_{SEQ}$ measurable in the scenario of interest of this work.

\begin{proof}
    Since $S_c$ is a CHSH-like operator, Tsirelson's bound already assures that $\expval*{S_c}\leq 2\sqrt{2}$.
    Therefore, we move to proving that if $\expval*{S_c} = 2\sqrt{2}$, then $\expval*{S_\theta} \leq \sqrt{2}$.
    Then, let $\ket{\psi}$ and the $\aliceobs{x}$, $\bobobs{y_1}$, $\bobobs{y_1,y_2}$ be a state and observables for which $\expval*{S_c} = 2\sqrt{2}$.
    Let us define for ease of writing the operators $Z_A = \frac{\aliceobs{0}+\aliceobs{1}}{\sqrt{2}}$ and $X_A = \frac{\aliceobs{0}-\aliceobs{1}}{\sqrt{2}}$.
    By construction we have that $\{Z_A,X_A\}=0$.
    Moreover, due to the self-testing properties of the CHSH scenario provided by $\expval**{S_c} = 2\sqrt{2}$, we have that $\expval*{Z_A \bobobs{0,0}} = \expval*{X_A \bobobs{1}} = 1$ and $\{\aliceobs{0},\aliceobs{1}\}\ket{\psi}=0$ \cite{Supic2020}, which implies $Z_A^{\dagger}Z_A\ket{\psi} = X_A^{\dagger}X_A\ket{\psi} = \ket{\psi}$.

    With these properties, we can rewrite the mean value of $S_\theta$ on $\ket{\psi}$ as
    \begin{equation}
    \begin{split}
        \expval*{S_\theta} &=\sqrt{2} \cos2\theta\expval*{Z_A \bobobs{0}}+\sqrt{2}\sin2\theta\expval*{X_A \bobobs{0,1}}\,.
    \end{split}
    \end{equation}

    Furthermore, let us define the auxiliary hermitian operator
    \begin{equation}
        P \equiv 2^{-\frac14} \left(\openone -\cos(2\theta) Z_A\bobobs{0} -\sin(2\theta)X_A\bobobs{0,1}\right)
        \label{eq:p_definition_appendix} \ .
    \end{equation}
    With an algebraic derivation we find that
    \begin{equation}
        \expval*{P^2} = \expval*{\sqrt{2}\openone - S_\theta} \ .
    \end{equation}

    Since $P^2$ is the square of an hermitian operator, it has non-negative eigenvalues.
    Hence $\expval*{P^2}\geq 0 $ and $\expval*{S_\theta} \leq \sqrt{2}$.

    We can also find an explicit quantum protocol that satisfies $\expval*{S_c} = 2\sqrt{2}$ and $\expval*{S_\theta} = \sqrt{2}$ for any $\theta$, concluding that the inequalities are tight.
    Because this protocol is of relevance for different parts of this text, we show it separately in Section \ref{sec:app_dilatation}.    
\end{proof}

We can also prove the following characterization for $\ket{\psi}$:
\begin{proposition}
    Let $\ket{\psi}$, $\aliceobs{x}$, $\bobobs{y_1}$,$\bobobs{y_1,y_2}$ be a state and operators (with the constraints \eqref{eq:our_cons_appendix}) that produce correlations such that $\expval*{S_c} = 2\sqrt{2}$ and $\expval*{S_\theta} = \sqrt{2}$. Then
    \begin{equation}
    \label{eq:state_satur_Stheta}
        \ket{\psi} = \cos2\theta \frac{\aliceobs{0}+\aliceobs{1}}{\sqrt{2}} \bobobs{0}\ket{\psi} + \sin2\theta\frac{\aliceobs{0}-\aliceobs{1}}{\sqrt{2}} \bobobs{0,1}\ket{\psi} \,,
    \end{equation}
\end{proposition}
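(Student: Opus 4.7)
The statement follows almost immediately from the machinery already developed in the proof of the Tsirelson-like bound. My plan is to reuse the auxiliary operator $P$ from equation \eqref{eq:p_definition_appendix} and to push the identity $\expval*{P^2} = \expval*{\sqrt{2}\openone - S_\theta}$ one step further, exploiting the saturation hypothesis.

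The first step is to observe that, under the assumptions $\expval*{S_c} = 2\sqrt{2}$ and $\expval*{S_\theta} = \sqrt{2}$, the previous derivation gives $\expval*{P^2} = 0$ on $\ket{\psi}$. Since $P$ is a bounded Hermitian operator (it is a real linear combination of the identity and of products of Hermitian operators acting on different tensor factors, as in \eqref{eq:our_cons_appendix}), $P^2$ is positive semi-definite. Therefore $\expval*{P^2} = \|P\ket{\psi}\|^2 = 0$ forces
\begin{equation}
    P\ket{\psi} = 0 \,.
\end{equation}
This is the key step: a vanishing mean value of a positive operator is equivalent to its square root annihilating the state.

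Substituting the explicit form of $P$ from \eqref{eq:p_definition_appendix} into $P\ket{\psi}=0$ and multiplying through by $2^{1/4}$ yields
\begin{equation}
    \ket{\psi} = \cos(2\theta)\, Z_A \bobobs{0}\ket{\psi} + \sin(2\theta)\, X_A \bobobs{0,1}\ket{\psi} \,,
\end{equation}
which, upon reinserting $Z_A = (\aliceobs{0}+\aliceobs{1})/\sqrt{2}$ and $X_A = (\aliceobs{0}-\aliceobs{1})/\sqrt{2}$, is exactly \eqref{eq:state_satur_Stheta}.

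I do not expect any genuinely hard step here: the real work is buried in Supplementary Methods B, where $\expval*{P^2} = \sqrt{2} - \expval*{S_\theta}$ was proved using (i) the self-testing consequences of $\expval*{S_c}=2\sqrt{2}$, namely $\{Z_A, X_A\}\ket{\psi}=0$, $Z_A^\dagger Z_A\ket{\psi} = X_A^\dagger X_A\ket{\psi} = \ket{\psi}$, $\expval*{Z_A\bobobs{0,0}}=\expval*{X_A\bobobs{1}}=1$, and (ii) the commutation $[\bobobs{0},\bobobs{0,1}]=0$ from \eqref{eq:our_cons_appendix}, which is what makes the anticommutator $\{Z_A,X_A\}\bobobs{0}\bobobs{0,1}$ appear in the cross term and thus vanish on $\ket{\psi}$. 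Once that identity is in hand, the present proposition is essentially a one-line corollary: saturation of an inequality of the form $\expval*{A^\dagger A}\geq 0$ implies $A\ket{\psi}=0$.
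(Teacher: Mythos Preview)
Your proposal is correct and follows essentially the same argument as the paper: both use the identity $\expval*{P^2}=\sqrt{2}-\expval*{S_\theta}$ from the preceding proposition, observe that saturation gives $\|P\ket{\psi}\|^2=0$ because $P$ is Hermitian, and then read off \eqref{eq:state_satur_Stheta} from $P\ket{\psi}=0$. The paper's version is just more terse; your additional remarks on why $P$ is Hermitian and on where the real work was done (the self-testing consequences of $\expval*{S_c}=2\sqrt{2}$ together with $[\bobobs{0},\bobobs{0,1}]=0$) are accurate commentary rather than a different route.
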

\begin{proof}
    Using the auxiliary hermitian operator $P$ of Eq.\ \eqref{eq:p_definition_appendix} and considering that $\ket{\psi}$ generates correlations on the boundary, we find:
    \begin{equation}
        \expval*{\sqrt{2}\openone - S_\theta} = \expval*{P^2} = \expval*{P^\dagger P} = \norm{P\ket{\psi}}^2 = 0
    \end{equation}
    which implies $P\ket{\psi}= 0$ and hence \eqref{eq:state_satur_Stheta}.
\end{proof}

In these proofs we used the self-testing properties of the CHSH inequality to fix some expectation values. 
From the same observation, we can obtain alternative and equivalent formulations of the boundary. 
For example, when $\expval*{S_c}=2\sqrt{2}$, we can introduce the operator 
\begin{equation}\label{eq:Sprime_alpha}
    S'_\alpha=\cos \alpha \ S_{+}+\sin \alpha \ S_{-}
\end{equation}
with 
\begin{equation}
    S_{\pm}=(\aliceobs{0} + \aliceobs{1})\bobobs{0} \pm (\aliceobs{0} - \aliceobs{1})\bobobs{0,1} \ .
\end{equation}
Under the condition $\expval*{S_c}=2\sqrt{2}$, we have that
\begin{equation}
    \sqrt{2}\expval{S_\theta}=\expval*{S'_{\frac{\pi}{8}-\theta}}
\end{equation}
so that the sequential set can be also characterized by the condition  $\expval{S'_\alpha}\le 2$.

Expression \eqref{eq:Sprime_alpha} is similar to those studied in \cite{Christensen2015} in a non-sequential scenario, and indeed has an analogous sum-of-squares decomposition
\begin{equation}\label{eq:SOS_alternative}
\begin{split}
    2\sqrt{2}\openone-S'_\alpha =& 
    \frac{1}{\sqrt{2}}\left[\sin\left(\frac{\pi}{4}+\alpha \right)B_0+\cos\left(\frac{\pi}{4}+\alpha \right)B_{0,1}-A_0\right]^2+\\
    &+\frac{1}{\sqrt{2}}\left[\sin\left(\frac{\pi}{4}+\alpha \right)B_0-\cos\left(\frac{\pi}{4}+\alpha \right)B_{0,1}-A_1\right]^2 \ .
\end{split}
\end{equation}
Having a sum of non-negative operators on the right hand side, Eq.\ \eqref{eq:SOS_alternative} gives a bound for the maximum value allowed by quantum physics for the expectation value of $ S_\alpha '$, independently on the value of $S_c$:
\begin{equation}\label{eq:bound_Sprime}
    \expval{S_\alpha'}\leq 2\sqrt{2} \ .
\end{equation}
This bound is actually tight since we can choose the strategy

\begin{equation}
    \begin{split}
        &A_j=(-1)^j\cos(\alpha+\frac{\pi}{4}) \sigma_x+\sin(\alpha+\frac{\pi}{4}) \sigma_z \\
        &B_0=\sigma_z \ , \qquad B_{0,1}=\sigma_x 
    \end{split}
\end{equation}
with the shared entangled state $\ket{\phi^+}$ to saturate it. 
The condition \eqref{eq:bound_Sprime} is represented with a dashed line in Fig.\ 4 of the main text. 

Different parameterizations of the boundary and arguments like the one just shown are not necessary for the proofs of this paper but could be useful when the condition $\expval*{S_c}=2\sqrt{2}$ is relaxed and the interior of the sequential set is studied.

\subsection{Proof of the randomness results}
\label{sec:app_minentropy}
Here we prove Result 2 about the randomness of the outcomes of $\bobobs{0}$ and $\bobobs{0,1}$.
We rephrase it more formally as:
\begin{proposition}
    Let $x, y_1, y_2 \in \{0,1\}$, $a, b_1, b_2 \in \{\pm 1\}$ and let $P_{exp}(a,\mathbf{b}|x,\mathbf{y}) = \expval*{\aliceproj{x}{a}\otimes \bobproj{y_1}{b_1}\bobproj{y_1,y_2}{b_2}}{\psi}\in Q_{SEQ}$, where $\aliceproj{x}{a}$, $\bobproj{y_1}{b_1}$ and $\bobproj{y_1,y_2}{b_2}$ are the projectors on the eigenspaces of observables $\aliceobs{x}$, $\bobobs{y_1}$ and $\bobobs{y_1,y_2}$ which satisfy \eqref{eq:our_cons_appendix}.
    Let $G$ be the guessing probability defined as
    \begin{gather}
    \label{eq:SDP_problem}
        G = \max_{p_{ABE}} \sum_{\mathbf{b}} p_{BE} \qty( \mathbf{b},\mathbf{b}|\mathbf{y_r} ) \\
    \begin{aligned}
    \label{eq:SDP_constraints}
        \text{s.t.} \hspace{0.5cm} \sum_{\mathbf{e}} p_{ABE}(a,\mathbf{b},\mathbf{e}|x,\mathbf{y}) &= P_{\rm exp}(a,\mathbf{b}|x,\mathbf{y})\, , \\
        p_{ABE}(a,\mathbf{b},\mathbf{e}|x,\mathbf{y}) &\in \mathcal{Q}_{SEQ}\,.
    \end{aligned}
    \end{gather}
    for $\mathbf{y_r} = (0,1)$.

    Then, if $\expval*{S_c}= 2\sqrt{2}$ and $\expval*{S_\theta} = \sqrt{2}$, we have that $G = \frac14$ if $\theta \neq n \frac{\pi}{4}$ and $G= \frac12$ otherwise.
\end{proposition}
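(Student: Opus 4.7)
The plan is to combine the CHSH self-testing implied by $\expval*{S_c}=2\sqrt{2}$ with the state identity \eqref{eq:state_satur_Stheta} coming from saturating $S_\theta$, so as to pin down Bob's measurements on the self-tested Bell pair and show that Eve is decoupled from the outcome pair $(b_1,b_2)$. First I would purify the shared state to $\ket{\Psi}_{ABE}$ with Eve holding the purifying register, and rewrite the guessing probability as
\begin{equation}
    G=\max_{\{E_{(b_1,b_2)}\}}\sum_{b_1,b_2}\expval*{E_{(b_1,b_2)}\otimes\bobproj{0}{b_1}\bobproj{0,1}{b_2}}{\Psi},
\end{equation}
with the maximum taken over POVMs on Eve's system. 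Since $\expval*{S_c}=2\sqrt{2}$ is the Tsirelson bound of a standard CHSH operator in $\aliceobs{0},\aliceobs{1}$ and $\bobobs{0,0},\bobobs{1}$, the CHSH self-testing theorem \cite{Supic2020} furnishes local isometries $V_A\otimes V_B$ such that $(V_A\otimes V_B\otimes\openone_E)\ket{\Psi}=\ket{\phi^+}_{\tilde A\tilde B}\otimes\ket{\xi}_{A''B''E}$, sending $\aliceobs{x}$ to the canonical CHSH observables on $\tilde A$, $\bobobs{0,0}$ to $\sigma_z$ on $\tilde B$, and $\bobobs{1}$ to $\sigma_x$ on $\tilde B$, all tensored with identity on the ancillary registers. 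In particular Eve's subsystem is confined to $\ket{\xi}$.

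The key step is to characterize $\bobobs{0}$ and $\bobobs{0,1}$ on $\tilde B\otimes B''$. Because $[\bobobs{0},\bobobs{0,0}]=0=[\bobobs{0},\bobobs{0,1}]$ by \eqref{eq:C_constraints} and $\bobobs{0,0}\simeq\sigma_z^{\tilde B}\otimes\openone$, the operator $\bobobs{0}$ must be block-diagonal in the $\sigma_z^{\tilde B}$ eigenbasis and takes the form $\bobobs{0}=\dyad{0}{0}_{\tilde B}\otimes O_++\dyad{1}{1}_{\tilde B}\otimes O_-$ with $O_\pm$ hermitian unitaries on $B''$. Writing $\bobobs{0,1}=\sum_{i,j}\dyad{i}{j}_{\tilde B}\otimes M_{ij}$ with $M_{ij}$ satisfying the hermiticity and commutation constraints implied by $[\bobobs{0},\bobobs{0,1}]=0$, I would substitute these forms into Eq.\ \eqref{eq:state_satur_Stheta} and, using that $(\aliceobs{0}\pm\aliceobs{1})/\sqrt{2}$ act as $\sigma_z^{\tilde A},\sigma_x^{\tilde A}$ on the Bell pair, match coefficients in the $\tilde A\tilde B$ computational basis. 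This produces four vector equations relating $O_\pm\ket{\xi}$ and $M_{ij}\ket{\xi}$; for $\cos 2\theta,\sin 2\theta\neq 0$ the $|01\rangle$ and $|10\rangle$ relations force $M_{00}\ket{\xi}=M_{11}\ket{\xi}=0$, while pairing the inner products of $\ket{\xi}$ with the $|00\rangle$ and $|11\rangle$ relations (and using that $\sin 2\theta\neq 0$ makes $\expval*{M_{01}}{\xi}$ real) yields $\expval*{(O_++O_-)}{\xi}=0$.

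From here the three correlators $\expval*{\bobobs{0}}{\Psi}$, $\expval*{\bobobs{0,1}}{\Psi}$ and $\expval*{\bobobs{0}\bobobs{0,1}}{\Psi}$ all vanish, so the joint distribution $P(b_1,b_2|0,1)=\tfrac{1}{4}\bigl(1+b_1\expval*{\bobobs{0}}+b_2\expval*{\bobobs{0,1}}+b_1b_2\expval*{\bobobs{0}\bobobs{0,1}}\bigr)$ is uniform. A slightly stronger version of the same argument, tracking the full post-measurement state rather than just expectation values, shows that the unnormalized conditional state $\tilde\rho_E^{b_1,b_2}=\tfrac{1}{2}\mathrm{Tr}_{\tilde B B''}[\bobproj{0}{b_1}\bobproj{0,1}{b_2}(\openone\otimes\sigma_{B''E})]$ is in fact proportional to a fixed density operator on $E$ for every $(b_1,b_2)$, so Eve's best POVM attains only $G=\tfrac{1}{4}$ and $H_{\min}=2$ bits. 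For the exceptional $\theta=n\pi/4$, Eq.\ \eqref{eq:state_satur_Stheta} degenerates: at $\theta=0$ it collapses to $\bobobs{0}\ket{\psi}=\bobobs{0,0}\ket{\psi}$, making $b_1$ coincide with the outcome of the self-tested $\sigma_z^{\tilde B}$ measurement and hence predictable through the CHSH correlations with Alice; at $\theta=\pi/4$ the analogous collapse $\bobobs{0,1}\ket{\psi}=\bobobs{1}\ket{\psi}$ makes $b_2$ predictable. In each case exactly one bit is leaked and $H_{\min}=1$.

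The main obstacle will lie in the second step, where the commutation and hermiticity relations alone leave considerable freedom for the ancillary blocks $O_\pm,M_{ij}$, a priori allowing $\bobobs{0}$ and $\bobobs{0,1}$ to act on $B''$ in ways correlated with Eve through $\ket{\xi}$. Showing rigorously that the single scalar identity \eqref{eq:state_satur_Stheta} cascades into the full decoupling of Eve from the outcome pair, and making transparent why this cascade fails precisely at $\theta=n\pi/4$, is the technical crux of the proof.
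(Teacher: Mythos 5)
Your overall strategy coincides with the paper's: self-test the CHSH sub-correlation from $\expval*{S_c}=2\sqrt{2}$, decompose $\bobobs{0}$ and $\bobobs{0,1}$ on the self-tested qubit times an ancilla, and feed the saturation identity \eqref{eq:state_satur_Stheta} into that decomposition. However, the step you yourself flag as the ``technical crux'' --- showing that the residual ancilla operators are decorrelated from Eve's measurements --- is precisely the content of the proof and is not supplied. Establishing that the three correlators $\expval*{\bobobs{0}}$, $\expval*{\bobobs{0,1}}$, $\expval*{\bobobs{0}\bobobs{0,1}}$ vanish only makes Bob's \emph{marginal} uniform; in a device-independent setting this says nothing about $G$. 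The paper closes this gap with a short but essential trick: multiplying the constraint $\cos2\theta\,\gamma_0\ket{\xi}=i\sin2\theta\,\tau_2\ket{\xi}$ from the left by $\bra{\xi}\eveproj{b_1,b_2}$ and observing that both expectation values are real (hermitian operators acting on disjoint factors), so taking real parts forces $\cos2\theta\expval*{\gamma_0\otimes\eveproj{b_1,b_2}}{\xi}=0$ for \emph{every} projector of Eve; this is what kills the only surviving term in $G$ and yields $G=\tfrac14$. Your asserted ``full decoupling of the conditional states on $E$'' is stronger than needed and is nowhere derived. A secondary overreach: you claim $\bobobs{0}$ is block-diagonal as an \emph{operator} in the $\sigma_z$ eigenbasis, but self-testing only fixes the action of $\bobobs{0,0}$ on the state, so the commutator argument only gives the weaker (and sufficient) statement $\gamma_1\ket{\xi}=\gamma_2\ket{\xi}=0$.

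Your handling of the exceptional cases $\theta=n\tfrac{\pi}{4}$ is also inverted. At $\theta=0$ the identity indeed collapses to $\bobobs{0}\ket{\psi}=\bobobs{0,0}\ket{\psi}$, but this makes $b_1$ the outcome of the self-tested $\sigma_z$ on a maximally entangled pair, which is exactly the bit that remains \emph{private} (correlation with Alice certifies privacy from Eve, it does not grant Eve access); the leaked bit is $b_2$, because after a projective first measurement the state reaching Bob$_2$ is separable and Eve can preprogram its $\sigma_x$ outcome. Symmetrically at $\theta=\tfrac{\pi}{4}$ it is $b_1$ that leaks, not $b_2$. Beyond the misidentification, you give no upper bound $G\leq\tfrac12$ in these cases; the paper obtains it by bounding $G$ by the single-bit guessing probability $G_1$ (or $G_2$), which is shown to equal $\tfrac12$ by the same reality argument, and then exhibits an explicit attack attaining $G=\tfrac12$.
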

\begin{proof}
    We work in the formalism of device-independent scenarios, which assumes that in principle the state shared by Alice and the Bobs is not separable from Eve's. We denote it $\ket{\psi}_{ABE} \in \mathcal{H}_A\otimes\mathcal{H}_B\otimes\mathcal{H}_E$.
    Alice's observables $\aliceobs{0}$ and $\aliceobs{1}$ act on $\mathcal{H}_A$, the Bobs' observables $\bobobs{0}$, $\bobobs{1}$, $\bobobs{0,0}$ and $\bobobs{0,1}$ act on $\mathcal{H}_B$.
    Since $\bobobs{1,0}$ and $\bobobs{1,1}$ contribute neither to $S_c$ nor to $S_\theta$, they have no role in this proof.
    Eve performs projective measurements $\eveproj{\mathbf{e}}$ on $\mathcal{H}_E$, that satisfy $\eveproj{\mathbf{e}}\eveproj{\mathbf{e'}}=\delta_{\mathbf{e}\mathbf{e'}}\eveproj{\mathbf{e}}$ and $\sum_{\mathbf{e}} \eveproj{\mathbf{e}} = \openone_E$.
    The joint correlation including Alice, Bob and Eve is 
    \begin{equation}
    p(a,\mathbf{b},\mathbf{e}|x,\mathbf{y}) = \expval*{\aliceproj{x}{a}\otimes \bobproj{y_1}{b_1}\bobproj{y_1,y_2}{b_2}\otimes \eveproj{\mathbf{e}}}{\psi} \, .
    \end{equation}
    
    Following \cite{Supic2020}, the saturation $\expval*{S_c}=2\sqrt{2}$ implies that, up to local isometries, the state can be written as $\ket{\psi}_{ABE} \uli \ket*{\phi^+}_{AB'}\ket{\xi}_{B''E} \in \mathcal{H}_A\otimes\mathcal{H}_{B'}\otimes\mathcal{H}_{B''} \otimes\mathcal{H}_E$, where the symbol $\uli$ is introduced to distinguish between equality and the equivalence up to local isometries used in the self-testing formalism.
    These local isometries do not act on Eve's space.
    With the same formalism we have that Alice's operators are such that
    \begin{equation}
        \begin{split}
            \aliceobs{0} \ket{\psi} &\uli \frac{\sigma_z + \sigma_x}{\sqrt{2}} \ket*{\phi^+}\ket{\xi} \\ 
            \aliceobs{1} \ket{\psi} &\uli \frac{\sigma_z - \sigma_x}{\sqrt{2}} \ket*{\phi^+}\ket{\xi} \ .
        \end{split}
    \end{equation}
    Since $\mathcal{H}_{B'}$ is a qubit space, $\mathcal{H}_{B'}=\mathbb{C}^2$, the action of the operators $\bobobs{0}$ and $\bobobs{0,1}$ can be decomposed with the Pauli matrices as
    \begin{equation}
        \begin{split}
            \bobobs{0}\ket{\psi} &\uli \Bigl[\openone\otimes \gamma_0 + \sigma_x\otimes\gamma_1 + \sigma_y\otimes\gamma_2 + \sigma_z\otimes\gamma_3\Bigr]\ket*{\phi^+}\ket{\xi}  \\
            \bobobs{0,1}\ket{\psi} &\uli \Bigl[\openone\otimes \tau_0 + \sigma_x\otimes\tau_1 + \sigma_y\otimes\tau_2 + \sigma_z\otimes\tau_3\Bigr]\ket*{\phi^+}\ket{\xi} \,,
        \end{split}
    \end{equation}
    where the $\gamma_i$ and $\tau_i$ operators are arbitrary hermitian operators acting on $\mathcal{H}_{B''}$.
    Since $\bobobs{0,0}\ket{\psi}\uli\sigma_z\otimes\openone\ket*{\phi^+}\ket{\xi}$ (because $\expval*{S_c} = 2\sqrt{2}$), the commutation relation $[\bobobs{0},\bobobs{0,0}]=0$ implies that $\gamma_1\ket{\xi} = \gamma_2\ket{\xi} = 0$.
    
    Then, by applying the local isometry that realizes our relations to both sides of  \eqref{eq:state_satur_Stheta}, we retrieve the following four constraints 
    
    \begin{align}
       \label{eq:gamma_tau_first} \cos2\theta \gamma_0 \ket{\xi} - i\sin2\theta \tau_2\ket{\xi} &= 0 \\
        \sin2\theta \tau_0\ket{\xi} &= 0 \\
        \sin2\theta \tau_3\ket{\xi} &= 0 \\
        \cos2\theta \gamma_3 \ket{\xi} + \sin2\theta \tau_1\ket{\xi} &= \ket{\xi}\,,
    \end{align}
    and consequently, if $\sin 2\theta \neq 0$, $\tau_0\ket{\xi}=\tau_3\ket{\xi}=0$, $\bobobs{0,1}\ket{\psi} \uli \Bigl[\sigma_x \otimes \tau_1 + \sigma_y \otimes \tau_2\Bigr]\ket*{\phi^+}\ket{\xi}$.
    
  Let us now study the guessing probability:
    \begin{equation}
        \begin{split}
            G &= \sum_{b_1b_2} \text{Prob}[e_1=b_1, e_2=b_2|y_1=0,y_2=1] \\
            &= \sum_{b_1b_2} \expval*{\openone_A\otimes \bobproj{0}{b_1} \bobproj{0,1}{b_2}\otimes \eveproj{b_1,b_2}}{\psi} \ .
        \end{split}
    \end{equation}
    
    We consider first the case $\sin 2\theta\neq 0$ and $\cos 2\theta\neq 0$.
    We can write the projectors that decompose $\bobobs{0}$ and $\bobobs{0,1}$ as:
    
    \begin{align}
        \bobproj{0}{b_1}\ket{\psi} &\uli \frac12\Bigl[\openone\otimes\openone + {b_1}\left(\openone\otimes\gamma_0 +\sigma_z\otimes\gamma_3\right)\Bigr]\ket*{\phi^+}\ket{\xi} \\ 
        \bobproj{0,1}{b_2}\ket{\psi} &\uli \frac12\Bigl[\openone\otimes\openone + {b_2}\left(\sigma_x\otimes\tau_1 +\sigma_y\otimes\tau_2\right)\Bigr]\ket*{\phi^+}\ket{\xi}\label{eq:c2b2simplified}
    \end{align}
    and replace them, together with the form of $\ket{\psi}$ in $G$.
    Since all terms in which $\bobproj{0}{b_1}\bobproj{0,1}{b_2}$ acts as a Pauli matrix on $\mathcal{H}_{B'}$ vanish because $\expval*{\openone_A\otimes\sigma_i}{\phi^+}=0$, we end up with:
    \begin{equation}
        G=\frac14\left(1+\sum_{b_1b_2}{b_1}\expval*{\gamma_0\otimes \eveproj{b_1,b_2}}{\xi}\right) \ .
    \end{equation}
    
    Then, we multiply from the left both sides of Eq.\ \eqref{eq:gamma_tau_first} by $\bra{\xi}\eveproj{b_1,b_2}$, finding that 
    \begin{equation}
        \cos2\theta\expval*{\eveproj{b_1,b_2}\otimes\gamma_0}{\xi} = i \sin2\theta\expval*{\eveproj{b_1,b_2}\otimes\tau_2}{\xi}\,.
    \end{equation}
    Considering that $\gamma_0$ and $\tau_2$ are hermitian and commute with Eve's operators (because they act on different Hilbert spaces), the expectation values are real.
    Therefore, taking the real part on both sides, we have that 
    \begin{equation}
        \cos2\theta\expval*{\eveproj{b_1,b_2}\otimes\gamma_0}{\xi} = \cos2\theta\expval*{\gamma_0\otimes \eveproj{b_1,b_2}}{\xi} = 0\,.
    \end{equation}
    If $\cos 2\theta \neq 0$, we have that
    \begin{equation}
    \label{eq:gamma0Eis0}
        \expval*{\gamma_0\otimes \eveproj{b_1,b_2}}{\xi} = 0
    \end{equation}
    and hence $G=\frac14$.
    
    For $\sin 2\theta =0$, we consider the probability that Eve guesses the outcome of $\bobobs{0}$ (regardless of whether she guesses $\bobobs{0,1}$ or not):
    \begin{equation}
        \begin{split}
            G_1 &= \sum_{b_1} \text{Prob}[e_1=b_1|y_1=0,y_2=1] \\
            &= \sum_{b_1} \expval*{\openone_A\otimes \bobproj{0}{b_1} \otimes \sum_{b_2}\eveproj{b_1,b_2}}{\psi}
        \end{split}
    \end{equation}
    We use the same decomposition of $\bobproj{0}{b_1}$ of above.
    The terms in $\openone\otimes\openone$ and $\openone\otimes\gamma_0$ are the only ones that do not vanish, and we have $G_1 = \frac12\left(1+\sum_{b_1b_2} {b_1} \expval*{\gamma_0\otimes \eveproj{b_1,b_2}}{\xi}\right)$.
    Because $\cos 2\theta \neq 0$, we can use Eq.\ \eqref{eq:gamma0Eis0} to find that the second term in the above parenthesis is 0 and $G_1=\frac12$.
    Since by definition of joint probability $G\leq G_1$, we have that $G\leq \frac12$.
    In Section \ref{sec:app_dilatation}, we show a strategy with which Eve can perfectly guess the outcome of $\bobobs{0,1}$, hence $G=\frac12$.
    
    For $\cos 2\theta = 0$, we consider the probability that Eve guesses the outcome of $\bobobs{0,1}$ (regardless of whether she guesses $\bobobs{0}$ or not):
    \begin{equation}
        \begin{split}
            G_2 &= \sum_{b_2} \text{Prob}[e_2=b_2|y_1=0,y_2=1] \\
            &= \sum_{b_2} \expval*{\openone_A\otimes \bobproj{0,1}{b_2}\otimes \sum_{b_1}\eveproj{b_1,b_2}}{\psi} \ .
        \end{split}
    \end{equation}
    Because $\sin 2\theta \neq 0$, we can replace Eq.\ \eqref{eq:c2b2simplified} in $G_2$.
    Then all the terms acting as a Pauli matrix on $\mathcal{H}_{B'}$ vanish and $G_2 = \frac12\sum_{b_1b_2}\expval*{\openone_A\otimes\openone_{B'}\otimes\openone_{B''}\otimes \eveproj{b_1,b_2}}{\psi}=\frac12$.
    Since by definition of joint probability $G\leq G_2$, we have that $G\leq \frac12$.
    In Section \ref{sec:app_dilatation}, we show a strategy with which Eve can perfectly guess the outcome of $\bobobs{0}$, hence $G=\frac12$.
\end{proof}

\subsection{Sequential-CHSH protocol with projective measurements}
\label{sec:app_dilatation}

In this section we discuss in more detail the protocol presented in the main text, justifying the equivalence between the formulations in terms of Kraus operators and of observables $\bobobs{y_1}$,$\bobobs{y_1,y_2}$. 
The natural framework to describe such equivalence is the Von Neumann formalism, in which the measurements are realized through the interaction of the system with the environment.
To simplify the notation, we focus only on the Bobs' side, where we realize the sequential measurements.

A reference scheme is depicted in Fig. \ref{fig:seq_setup}.
We denote the state of the system with $\ket{\varphi} \in \mathcal{H}_{B'}$ and we couple it with an ancilla qubit state  $\ket{\theta}\in \mathcal{H}_{B''}$,
\begin{equation}    
    \ket{\theta} \equiv \cos\theta \ket{0} + \sin\theta \ket{1} \,,
\end{equation}
forming the joint state $\ket{\psi}\equiv\ket{\varphi}\otimes\ket{\theta}=\ket{\varphi}\ket{\theta}$. 

When the input of Bob$_1$ is $y_1=1$, he's performing the projective measurement of $\bobobs{1}=\sigma_x \otimes \openone$ on $\mathcal{H}_{B'} \otimes \mathcal{H}_{B''}$.
When, instead, $y_1=0$, the measurements can be described through the unitary evolution
\begin{equation}
\begin{split}
    U\ket{\psi}&=e^{i \mathcal{H}_I}\ket{\varphi}\ket{\theta} =  e^{i \frac{\pi}{4}\qty( \openone-\sigma_z)\otimes\qty(\openone-\sigma_x ) }\ket{\varphi}\ket{\theta} = \\ &=\Bigl[\dyad{0}{0}\otimes \openone+\dyad{1}{1}\otimes \sigma_x\Bigr]\ket{\varphi}\ket{\theta}\, .
\end{split}
\end{equation}
Here the interaction Hamiltonian $\mathcal{H}_I$ defines a CNOT gate between the original state and the ancilla.
The Kraus operators realizing Bob$_1$'s measurements can be found
by measuring in the ancilla basis defined by the states $\ket{0}$ and $\ket{1}$, which here are the eigenstates of $\sigma_z$ in $\mathcal{H}_{B''}$:
\begin{equation}
    \begin{split} 
        p(b_1=+1 \lvert y_1=0) &= \norm{{\mel{0}{U}{\theta}}\ket{\varphi}}^2 = \norm{K_{+}(\theta)\ket{\varphi}}^2\\
        p(b_1=-1\lvert y_1=0) &= \norm{{\mel{1}{U}{\theta}}\ket{\varphi}}^2  = \norm{K_{-}(\theta)\ket{\varphi}}^2
    \end{split}
    \label{eq:app_prob_bob_1}
\end{equation}
where $K_{+} (\theta) \equiv {\mel{0}{U}{\theta}}$ and $K_{-} (\theta) \equiv {\mel{1}{U}{\theta}}$ are exactly the operators in defined in the main text, acting on the qubit space $\mathcal{H}_{B'}$. 
These Kraus operators are chosen so that for $\theta=\frac{\pi}{4}$ are proportional to the identity operator while for $\theta=0$ they realize a projective measurement of $\sigma_z$ (assigning the labels $\pm 1$ to the two outcomes).
In the literature, these Kraus operators are associated to weak measurements of $\sigma_z$ \cite{Foletto2020}.
Indeed, the unitary evolution can be also written as $U\ket{\psi}=U_\theta\ket{\varphi}\ket{0}$ with $U_\theta=e^{i \mathcal{H}_I}(\openone\otimes e^{-i\frac{\theta}2\sigma_y})$ and the $\theta$ parameter representing the ``strength'' of the interaction.

To obtain the projective description in terms of observables $\bobobs{y_1}$ and $\bobobs{y_1,y_2}$, we can recast \eqref{eq:app_prob_bob_1} as
\begin{equation}
    \begin{split}
        p(b_1=+1 \lvert y_1=0)&= \bra{\psi}U^\dagger \Bigl[\openone \otimes \dyad{0}\Bigr]U \ket{\psi} \\
        p(b_1=-1 \lvert y_1=0)&= \bra{\psi}U^\dagger \Bigl[\openone \otimes \dyad{1}\Bigr]U \ket{\psi}
    \end{split}
\end{equation}
and identify
\begin{equation}
    \begin{split}
        \bobproj{0}{+} &\equiv U^\dagger \Bigl[\openone \otimes \dyad{0}\Bigr]U \\
       \bobproj{0}{-} &\equiv U^\dagger \Bigl[\openone \otimes \ketbra{1}{1}\Bigr]U \ .
    \end{split}
\end{equation}
Substituting the explicit form of $U$, we find that $\bobproj{0}{+}$ and $\bobproj{0}{-}$ are the projectors associated with the eigenvalues $\pm 1$ of the observable $\bobobs{0}=\bobproj{0}{+}-\bobproj{0}{-}=\sigma_z\otimes \sigma_z$. 

\begin{figure}
    \centering
    \includegraphics[width=0.4\linewidth]{./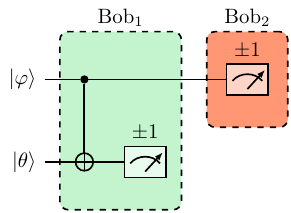}
    \caption{The sequential framework can be described adopting the Von Neumann formalism, where the system is coupled with an ancillary state through an unitary evolution, which is a CNOT gate in this case.}
    \label{fig:seq_setup}
\end{figure}

As a second step, Bob$_2$ receive the state after the unitary evolution. 
In correspondence of the input $y_2=0$ he performs projective measurements of $\sigma_z$, producing the statistics
\begin{equation}
    \begin{split}
        p(b_2=+1 \lvert \mathbf{y}=0,0) &= \expval*{U^{\dagger}\qty[\frac{\left(\openone+\sigma_z \right)}{2}\otimes \openone ] U}{\psi}  \\
        p(b_2=-1 \lvert \mathbf{y}=0,0) &= \expval*{U^{\dagger}\qty[\frac{\left(\openone-\sigma_z \right)}{2}\otimes \openone ] U}{\psi}
    \end{split}
\end{equation}
The definitions
\begin{equation}
    \begin{split}
         \bobproj{0,0}{+} &\equiv U^{\dagger}\qty[\frac{\left(\openone+\sigma_z \right)}{2}\otimes \openone ] U \\
        \bobproj{0,0}{-} &\equiv U^{\dagger}\qty[\frac{\left(\openone-\sigma_z \right)}{2}\otimes \openone ] U
    \end{split}
\end{equation}
lead to the projectors of $\bobobs{0,0}=\sigma_z \otimes \openone$.
Similar arguments hold for the input $y_2=1$, which realizes a projective measurement of $\sigma_x$.
After identifying
\begin{equation}
    \begin{split}
       \bobproj{0,1}{+} &\equiv U^{\dagger}\qty[\frac{\left(\openone+\sigma_x \right)}{2}\otimes \openone ]U\\
       \bobproj{0,1}{-} &\equiv U^{\dagger}\qty[\frac{\left(\openone-\sigma_x \right)}{2}\otimes \openone ] U  \,,
    \end{split}
\end{equation}
we can conclude that $\bobobs{0,1}=\sigma_x \otimes\sigma_x$. 

In our protocol, reintroducing Alice, the state in $\mathcal{H}_A \otimes \mathcal{H}_{B'}$ is the Bell state $\ket*{\phi^+}$.
We note that since the state shared by Alice and the Bobs $\ket{\psi}= \ket*{\phi^+}\ket{\theta}$ is pure, Eve has no hope of gaining any information on the measurements outcomes, not even if $\theta \in \{0, \frac{\pi}{4}\}$, where we have said that she can have perfect correlations with one of $\bobobs{0}$, $\bobobs{0,1}$.

We now show other strategies valid for $\theta \in \{0, \frac{\pi}{4}\}$ that give Eve more power.
Let $\ket{\psi}_{ABE} \equiv \ket*{\phi^+}_{AB'}\ket*{\phi^+}_{B''E}$ be the global state including also Eve.
Operators $\aliceobs{0}, \aliceobs{1}, \bobobs{1}, \bobobs{0,0}$ are kept the same as above and since they act only on $\mathcal{H}_{AB'}$, on which the state is also kept the same, Alice and the Bobs' correlations when these operators are measured are unchanged.
However, for $\theta = 0$, Eve sets the Bobs' devices to measure in $\mathcal{H}_{B'}\otimes \mathcal{H}_{B''}$:
\begin{equation}
    \begin{split}
        \bobobs{0} &= \sigma_z \otimes \openone \\
        \bobobs{0,1} &= \openone \otimes \sigma_x
    \end{split}
\end{equation}
and measures $\sigma_x$ on her part of the state.
Instead, for $\theta = \frac{\pi}{4}$, Eve sets the Bobs' devices to measure:
\begin{equation}
    \begin{split}
        \bobobs{0} &= \openone \otimes \sigma_z \\
        \bobobs{0,1} &= \sigma_x \otimes \openone
    \end{split}
\end{equation}
and measures $\sigma_z$ on her part of the state.
In both cases, simple application of the Born rule shows that Alice and the Bobs' correlations are the same as above (also for the parts including $\bobobs{0}$ and $\bobobs{0,1}$), but Eve's outcome coincides with that of $\bobobs{0,1}$ (for $\theta = 0$) or $\bobobs{0}$ (for $\theta = \frac{\pi}{4}$).
This means that the guessing probability of the pair of outcomes of $\bobobs{0}$, $\bobobs{0,1}$ is $G \geq \frac12$, but considering the upper bound $G \leq \frac12$ proven in Section \ref{sec:app_minentropy}, we have $G = \frac12$.

\subsection{Experimental setup}
\label{sec:app_setup}
\begin{figure}
    \centering
    \includegraphics[width=.5\linewidth]{./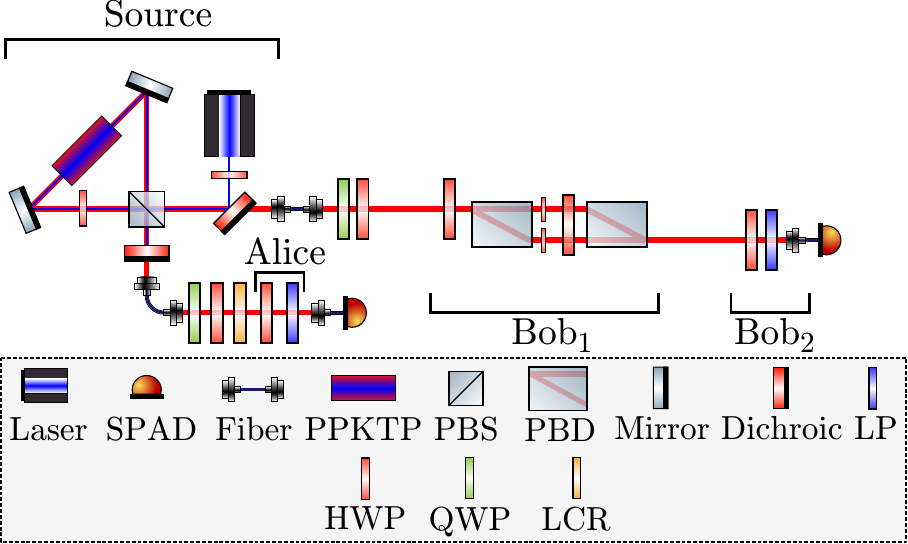}
    \caption{Experimental setup.}
    \label{fig:experimentalSetup}
\end{figure}

The experimental setup is depicted in Fig.\ \ref{fig:experimentalSetup}. 
We produce polarization-entangled photon pairs at approximately 810 nm \cite{Foletto2021} and send them to the two setups representing Alice and the Bobs using single mode fibers. There, their polarization is controlled in free space with quarter-wave plates (QWP) and half-wave plates (HWP) to approximately obtain the Bell state $\ket{\phi^+}=\qty(\ket{HH}+\ket{VV})/\sqrt{2}$.
Furthermore, a liquid crystal retarder is used on Alice's side to fine tune the relative phase between the two components.
Here, the $\ket{H}$ (horizontal) and $\ket{V}$ (vertical) states correspond to the states $\ket{0}$ and $\ket{1}$.

Alice uses a HWP and a linear polarizer (LP) to perform projective measurements.
Instead, on the other side, the setup is divided into Bob$_1$ and Bob$_2$: The first implements the weak measurement with a Mach-Zender interferometer (MZI) which couples polarization modes into spatial ones \cite{Foletto2021}.
The parameter $\theta$ of the coupling is controlled by setting the HWP shared between the two paths of the interferometer at $\pi/4 - \theta/2$. 
By selecting one outcome at a time with an external HWP, the MZI realizes the Kraus operators of the sequential protocol.
After the MZI, the photons are sent to Bob$_2$ where they encounter a projective measurement station, composed by an HWP and an LP.
Finally, at each side, the photons are coupled into single--mode fibers and then directed to single-photon avalanche diodes (SPADs) connected to a $1 \text{ ps}$-resolution timetagger that returns coincidence counts within a $\pm 0.55 \text{ ns}$ time window.

\end{document}